\definecolor{shadecolor}{rgb}{0.9,0.9,0.9}
\newtheorem{definition}{Definition}
\newtheorem{proposition}{Proposition}
\newtheorem{lemma}[proposition]{Lemma}
\newtheorem{theorem}[proposition]{Theorem}
\def\squareforqed{\hbox{\rlap{$\sqcap$}$\sqcup$}}
\def\qed{\ifmmode\squareforqed\else{\unskip\nobreak\hfil
\penalty50\hskip1em\null\nobreak\hfil\squareforqed
\parfillskip=0pt\finalhyphendemerits=0\endgraf}\fi}
\def\endenv{\ifmmode\;\else{\unskip\nobreak\hfil
\penalty50\hskip1em\null\nobreak\hfil\;
\parfillskip=0pt\finalhyphendemerits=0\endgraf}\fi}
\newenvironment{proof}{\noindent \textbf{{Proof~} }}{\hfill $\blacksquare$}
\newcounter{remark}
\newenvironment{remark}[1][]{\refstepcounter{remark}\par\medskip\noindent%
\textbf{Remark~\theremark #1} }{\medskip}
\newcounter{example}
\mathchardef\ordinarycolon\mathcode`\:
\def\vcentcolon{\mathrel{\mathop\ordinarycolon}}
\newmdenv[skipabove=7pt,
skipbelow=7pt,
backgroundcolor=darkblue!15,
innerleftmargin=5pt,
innerrightmargin=5pt,
innertopmargin=5pt,
leftmargin=0cm,
rightmargin=0cm,
innerbottommargin=5pt,
linewidth=1pt]{tBox}
\newmdenv[skipabove=7pt,
skipbelow=7pt,
backgroundcolor=darkred!15,
innerleftmargin=5pt,
innerrightmargin=5pt,
innertopmargin=5pt,
leftmargin=0cm,
rightmargin=0cm,
innerbottommargin=5pt,
linewidth=1pt]{rBox}
\newmdenv[skipabove=7pt,
skipbelow=7pt,
backgroundcolor=blue2!25,
innerleftmargin=5pt,
innerrightmargin=5pt,
innertopmargin=5pt,
leftmargin=0cm,
rightmargin=0cm,
innerbottommargin=5pt,
linewidth=1pt]{dBox}
\newmdenv[skipabove=7pt,
skipbelow=7pt,
backgroundcolor=darkkblue!15,
innerleftmargin=5pt,
innerrightmargin=5pt,
innertopmargin=5pt,
leftmargin=0cm,
rightmargin=0cm,
innerbottommargin=5pt,
linewidth=1pt]{sBox}
\definecolor{darkblue}{RGB}{0,76,156}
\definecolor{darkkblue}{RGB}{0,0,153}
\definecolor{blue2}{RGB}{102,178,255}
\definecolor{darkred}{RGB}{195,0,0}
\newcommand{\nc}{\newcommand}
\nc{\rnc}{\renewcommand}
\nc{\lbar}[1]{\overline{#1}}
\nc{\bra}[1]{\langle#1|}
\nc{\ket}[1]{|#1\rangle}
\nc{\ketbra}[2]{|#1\rangle\!\langle#2|}
\nc{\braket}[2]{\langle#1|#2\rangle}
\nc{\proj}[1]{| #1\rangle\!\langle #1 |}
\nc{\avg}[1]{\langle#1\rangle}
\nc{\rank}{\operatorname{Rank}}
\nc{\smfrac}[2]{\mbox{$\frac{#1}{#2}$}}
\nc{\tr}{\operatorname{Tr}}
\nc{\ox}{\otimes}
\nc{\dg}{\dagger}
\nc{\dn}{\downarrow}
\nc{\cA}{{\cal A}}
\nc{\cB}{{\cal B}}
\nc{\cC}{{\cal C}}
\nc{\cD}{{\cal D}}
\nc{\cE}{{\cal E}}
\nc{\cF}{{\cal F}}
\nc{\cG}{{\cal G}}
\nc{\cH}{{\cal H}}
\nc{\cI}{{\cal I}}
\nc{\cJ}{{\cal J}}
\nc{\cK}{{\cal K}}
\nc{\cL}{{\cal L}}
\nc{\cM}{{\cal M}}
\nc{\cN}{{\cal N}}
\nc{\cO}{{\cal O}}
\nc{\cP}{{\cal P}}
\nc{\cQ}{{\cal Q}}
\nc{\cR}{{\cal R}}
\nc{\cS}{{\cal S}}
\nc{\cT}{{\cal T}}
\nc{\cU}{{\cal U}}
\nc{\cV}{{\cal V}}
\nc{\cX}{{\cal X}}
\nc{\cY}{{\cal Y}}
\nc{\cZ}{{\cal Z}}
\nc{\cW}{{\cal W}}
\nc{\csupp}{{\operatorname{csupp}}}
\nc{\qsupp}{{\operatorname{qsupp}}}
\nc{\var}{{\operatorname{var}}}
\nc{\rar}{\rightarrow}
\nc{\lrar}{\longrightarrow}
\nc{\polylog}{{\operatorname{polylog}}}
\nc{\wt}{{\operatorname{wt}}}
\nc{\av}[1]{{\left\langle {#1} \right\rangle}}
\nc{\supp}{{\operatorname{supp}}}
\nc{\argmin}{{\operatorname{argmin}}}
\def\x{\xi}
\nc{\RR}{{{\mathbb R}}}
\nc{\CC}{{{\mathbb C}}}
\nc{\FF}{{{\mathbb F}}}
\nc{\NN}{{{\mathbb N}}}
\nc{\ZZ}{{{\mathbb Z}}}
\nc{\PP}{{{\mathbb P}}}
\nc{\QQ}{{{\mathbb Q}}}
\nc{\UU}{{{\mathbb U}}}
\nc{\EE}{{{\mathbb E}}}
\nc{\id}{{\operatorname{id}}}
\nc{\CHSH}{{\operatorname{CHSH}}}
\nc{\be}{\begin{equation}}
\nc{\ee}{{\end{equation}}}
\nc{\bea}{\begin{eqnarray}}
\nc{\eea}{\end{eqnarray}}
\nc{\rU}{\mbox{U}}
\nc{\ob}[1]{#1}
\nc{\OLOCC}{{\text{1-LOCC}}}
\nc{\SEP}{{\text{SEP}}}
\nc{\NS}{{\text{NS}}}
\nc{\LOCC}{{\text{LOCC}}}
\nc{\PPT}{{\text{PPT}}}
\nc{\EXT}{{\text{EXT}}}
\nc{\Sym}{{\operatorname{Sym}}}
\nc{\ERLO}{{E_{{ R,LO}}}}
\nc{\ERLOCC}{{E_{{R,\text{PPT}}}}}
\nc{\ERPPT}{{E_{{R,\text{PPT}}}}}
\nc{\ERPPTinf}{{E^{\infty}_{{R,\text{PPT}}}}}
\nc{\ER}{E_{\rm R}}
\nc{\ERLOCCinfty}{{E^{\infty}_{{r,LOCC}}}}
\nc{\Aram}{{\operatorname{\sf A}}}
\nc{\ECPPT}{{E_{{C,\text{PPT}}}}}
\nc{\EDPPT}{{E_{{D,\text{PPT}}}}}
\nc{\Freek}{{\text{PPT$_k$}}}
\nc{\Freesec}{{\text{PPT$_2$}}}
\nc{\NB}{{{{\tiny N}}}}
\nc{\LB}{{{LN}}}
\nc{\NPT}{{\text{NPT}}}
\def\grd@save@target#1{%
  \def\grd@target{#1}}
\def\grd@save@start#1{%
  \def\grd@start{#1}}
\tikzset{
  grid with coordinates/.style={
    to path={%
      \pgfextra{%
        \edef\grd@@target{(\tikztotarget)}%
        \tikz@scan@one@point\grd@save@target\grd@@target\relax
        \edef\grd@@start{(\tikztostart)}%
        \tikz@scan@one@point\grd@save@start\grd@@start\relax
        \draw[minor help lines,magenta] (\tikztostart) grid (\tikztotarget);
        \draw[major help lines] (\tikztostart) grid (\tikztotarget);
        \grd@start
        \pgfmathsetmacro{\grd@xa}{\the\pgf@x/1cm}
        \pgfmathsetmacro{\grd@ya}{\the\pgf@y/1cm}
        \grd@target
        \pgfmathsetmacro{\grd@xb}{\the\pgf@x/1cm}
        \pgfmathsetmacro{\grd@yb}{\the\pgf@y/1cm}
        \pgfmathsetmacro{\grd@xc}{\grd@xa + \pgfkeysvalueof{/tikz/grid with coordinates/major step}}
        \pgfmathsetmacro{\grd@yc}{\grd@ya + \pgfkeysvalueof{/tikz/grid with coordinates/major step}}
        \foreach \x in {\grd@xa,\grd@xc,...,\grd@xb}
        \node[anchor=north] at (\x,\grd@ya) {\pgfmathprintnumber{\x}};
        \foreach \y in {\grd@ya,\grd@yc,...,\grd@yb}
        \node[anchor=east] at (\grd@xa,\y) {\pgfmathprintnumber{\y}};
      }
    }
  },
  minor help lines/.style={
    help lines,
    step=\pgfkeysvalueof{/tikz/grid with coordinates/minor step}
  },
  major help lines/.style={
    help lines,
    line width=\pgfkeysvalueof{/tikz/grid with coordinates/major line width},
    step=\pgfkeysvalueof{/tikz/grid with coordinates/major step}
  },
  grid with coordinates/.cd,
  minor step/.initial=.2,
  major step/.initial=1,
  major line width/.initial=2pt,
}
\def\problem@s{}
\newcounter{problems@cnt}
\newcommand{\allproblems}{\problem@s}
\pgfplotsset{compat=1.9}
\definecolor{colortwo}{rgb}{0.4,0.77,0.17}
\definecolor{colorthree}{rgb}{0.01,0.51,0.93}
\newcommand*\samethanks[1][\value{footnote}]{\footnotemark[#1]}
\begin{document}
\title{Protocols and Trade-Offs of Quantum State Purification}

\author[1]{Hongshun Yao\thanks{Hongshun Yao and Yu-Ao Chen contributed equally to this work.}}
\author[1]{Yu-Ao Chen\samethanks[1]}
\author[1]{Erdong Huang}
\author[1]{Kaichu Chen}
\author[2,3]{Honghao Fu
\thanks{honghao.fu@concordia.ca}}
\author[1]{Xin Wang \thanks{felixxinwang@hkust-gz.edu.cn}}

\affil[1]{\small Thrust of Artificial Intelligence, Information Hub,\par The Hong Kong University of Science and Technology (Guangzhou), Guangdong 511453, China.}
\affil[2]{\small Concordia Institute for Information Systems Engineering, \par Concordia University, Quebec H3G 1M8, Canada.}
\affil[3]{\small Computer Science and Artificial Intelligence Lab, \par Massachusetts Institute of Technology, Massachusetts 02139, USA.}

\date{\today}
\maketitle

\begin{abstract}
Quantum state purification is crucial in quantum communication and computation, aiming to recover a purified state from multiple copies of an unknown noisy state.
This work introduces a general state purification framework designed to achieve the highest fidelity with a specified probability and characterize the associated trade-offs. For i.i.d. quantum states under depolarizing noise, our framework can replicate the purification protocol proposed by [Barenco et al., SIAM Journal on Computing, 26(5), 1997] and further provide exact formulas for the purification fidelity and probability with explicit trade-offs. We prove the protocols' optimality for two copies of noisy states with any dimension and confirm its optimality for higher numbers of copies and dimensions through numerical analysis.
Our methodological approach paves the way for proving the protocol's optimality in more general scenarios and leads to optimal protocols for other noise models. Furthermore, we present a systematic implementation method via block encoding and parameterized quantum circuits, providing explicit circuits for purifying three-copy and four-copy states under depolarizing noise. Finally, we estimate the sample complexity and generalize the protocol to a recursive form, demonstrating its practicality for quantum computers with limited memory.
\end{abstract}

\tableofcontents
\newpage

\section{Introduction}\label{sec:introduction}

The pursuit of quantum technologies with superior performance is central to the advancement of quantum information processing. However, the intrinsic susceptibility of quantum devices to environmental disturbances significantly hinders their operational integrity, thereby constraining the demonstration of quantum advantages~\cite{arute2019quantum,preskill2018quantum} in various domains, including quantum communication and computation. To address this challenge, a suite of strategies has been devised for the preservation of quantum states against quantum noises. Quantum error correction~\cite{shor1995scheme, steane1996error} is one such technique that addresses this issue by leveraging an expanded Hilbert space to encode quantum information.
On a parallel front, quantum state purification~\cite{bennett1996purification,cirac1999optimal,keyl2001rate} emerges as a vital technique, particularly within the context of quantum information processing and near-term quantum computing. 

Quantum state purification aims to generate the quantum state with enhanced purity from multiple copies of an unknown noisy state. Theoretical studies have extensively explored purification protocols, particularly in the context of noise arising from the depolarizing channel~\cite{cirac1999optimal,keyl2001rate}, which is a common type of noise encountered in quantum systems.
One of the most celebrated purification protocols is the Cirac-Ekert-Macchiavello (CEM) protocol~~\cite{cirac1999optimal}. This protocol has been shown to achieve the optimal average fidelity between the output state and the ideal pure state, meaning that the fidelity obtained after applying the CEM protocol cannot be further improved by any other purification protocols without consuming more noisy copies. Inspired by the theoretical success of the CEM protocol, several experimental works have been proposed to demonstrate its practical implementation in various experimental settings~\cite{ricci2004experimental,hou2014experimental}. Specifically, the work~\cite{ricci2004experimental} first implemented the CEM protocol in a linear-optical system for the case of two input states, i.e., two depolarized states are consumed in this protocol. Another work~\cite{hou2014experimental} demonstrated the CEM protocol in the nuclear magnetic resonance system. 

General purification protocols, such as the CEM protocol, focus on the optimal average fidelity $F=\sum_{j}p_jf_j$ where $j$ ranges over all the measurement outcomes, $p_j$ is the probability of outcome $j$
and $f_j$ is the achieved fidelity when measuring $j$. While in a probabilistic setting, we just focus on the maximal fidelity $f_j$ and its successful probability $p_j$ instead of the average fidelity $F$. In this context, the purified state is obtained if the purification process is successful; otherwise, the output state is discarded, and the process is repeated. Such state purification is also referred to as probabilistic purification. Ref.~\cite{fiuravsek2004optimal} presented an optimal probabilistic purification protocol and derived the corresponding optimal fidelity, which can be represented as the eigenvalue of a specific matrix based on Choi operators~\cite{choi1975completely,jamiolkowski1972linear}. 

However, solving the optimal fidelity~\cite{fiuravsek2004optimal} becomes increasingly challenging as the dimension of the quantum system or the number of noisy states grows. Also, the optimal probability as well as the explicit relationship between fidelity and probability remain unclear, while little is known about the realization of such kind of method using quantum circuits. A recent work by Childs et al.~\cite{childs2023streaming} further proposed a streaming purification protocol based on the swap test~\cite{buhrman2001quantum}. It treated the swap test procedure as a gadget and employed it recursively to produce a purer qudit. While this approach offers a novel perspective on purification, it poses challenges in terms of quantum memory or circuit depth requirements. As the recursive depth increases, maintaining some purified states with longer coherence times becomes increasingly more difficult, limiting the scalability of the protocol.
Thus, a flexible framework to process probabilistic purification tasks is urgently needed. 
Such a framework should take as input the noise channel $\cN$, the number of input states affected by this channel $n$,
and the desired successful probability $p$, and outputs a probabilistic purification protocol that can achieve the maximal fidelity $F_{\cN}(n,p)$. 
This flexibility is particularly relevant in certain scenarios where the number of available noisy states may vary or where the purification process needs to adapt to dynamic changes in the quantum system. Additionally, taking into account the trade-offs between fidelity and probability, we may sometimes need to study the specific protocols that can achieve maximal fidelity while ensuring a certain level of success probability. 

In this paper, we address the limitations of existing purification protocols by leveraging semidefinite programs (SDP), block encoding techniques, and parameterized quantum circuits (PQC). We first formalize a SDP framework to investigate the probabilistic purification protocols that achieve the highest fidelity with a certain successful probability and trade-offs between them. 
In particular, based on the SDP framework, we re-derive the pioneering purification protocol proposed in work~\cite{barenco1997stabilization}, which projects the $n$ copies of depolarizing noise state into its symmetric subspace, and present the analytical forms of fidelity and probability. 
Remarkably, we prove its optimality when $n=2$ and arbitrary dimension $d$, and numerically demonstrate its optimality when $n\leq8$ and $d\leq10$. In order to demonstrate the potential and flexibility of our framework, we investigate it in different regimes with practical noise models.
We further give a feasible approach to implement the optimal protocol via efficient quantum circuits. 
In particular, we present explicit quantum circuits for $3$ or $4$ noisy input states, respectively, which significantly reduce the number of ancilla qubits compared with previous works~\cite{barenco1997stabilization,laborde2024quantum,yang2024quantum}. The strategy we derive these two circuits can be used for circuits with more inputs, i.e., $n>4$. We further develop an algorithm to estimate the sample complexity of the optimal protocol, determining the required number of unknown noisy state copies to achieve a specific fidelity. 
Based on efficient circuits for purification, we propose a recursive purification protocol that generalizes the swap-test-based purification method. By allowing an arbitrary number of inputs $n\geq 2$, the protocol achieves the desired fidelity with a reduced recursive depth.
 Overall, our results advance the understanding of probabilistic purification and provide practical tools for its implementation in quantum systems.

\textbf{Structure of the paper.} 
In Section~\ref{sec:opt_protocol}, we investigate the probabilistic purification protocols in detail, including the SDP framework for general cases, optimal protocols for depolarizing noises, and flexibility for other noise models. We further study practical and efficient circuits for the optimal protocol in Section~\ref{sec:implementation}. Finally, we propose algorithms and recursive protocols in terms of practically and effectiveness in Section~\ref{sec:experiments}, demonstrating advantage of our framework compared with existing methods.

\section{Optimal probabilistic purification protocol}\label{sec:opt_protocol}
In this section, we investigate probabilistic purification protocols and the optimality in different noise models. Before delving into the main subject, we introduce some notations. We consider a finite-dimensional Hilbert space $\cH_A$ and denote $A$ as the quantum system. Denote the dimension of $\cH_A$  as $d$.  Let $\{\ket{j} \}_{j=0,\cdots,d-1}$ be a standard computational basis. Denote $\cL(\cH_A)$ as the set of linear operators that map from $\cH_A$ to itself. A linear operator in $\cL(\cH_A)$ is called a density operator if it is positive semidefinite with trace one, and we denote $\cD(\cH_A)$ as the set of all density operators on $\cH_A$. A quantum channel $\cN_{A\to A}$ is a linear map from $\cL(\cH_A)$ to $\cL(\cH_{A})$ that is completely positive and trace-preserving (CPTP). Its associated Choi-Jamiołkowski operator is expressed as $J^{\cN}_{AA}:= \sum_{i, j=0}^{d-1}\ketbra{i}{j} \ox \cN_{A \to A}(\ketbra{i}{j})$. We denote $\cS_n$ as the symmetric group of degree $n$ and $\mathbf{P}_n(c)$ as the permutation operator where $c\in\cS_n$. $\Pi_n:=\frac{1}{n!}\sum_{c\in\cS_n}\mathbf{P}_n(c)$ denotes the projector on the symmetric subspace of $\cH_{A}^{\otimes n}$. 

Using multiple copies of the unknown noisy state, our objective is to recreate a qudit state with high fidelity to the ideal state for a certain probability. Specifically, given input noisy state $\rho^{\otimes n}\in\cD(\cH_A^{\otimes n})$, we focus on the completely positive trace non-increasing (CPTN) maps $\cE_{A^n\to A}$ that can generate an unnormalized state $\sigma^\prime\in\cL(\cH_A)$ such that the purified state $\sigma:=\sigma^\prime/\tr[\sigma^\prime]\in\cD(\cH_A )$ closely approximates an ideal pure state, with a certain probability. Notice that, in this work, the whole performance of purification protocols will be investigated, including the fidelity and probability over all noisy states affected by the same noise channel. 
The specific framework of probabilistic purification protocols is illustrated in Fig.~\ref{fig:framework}.

\begin{figure}[t]
    \centering
    \includegraphics[width=0.7\linewidth]{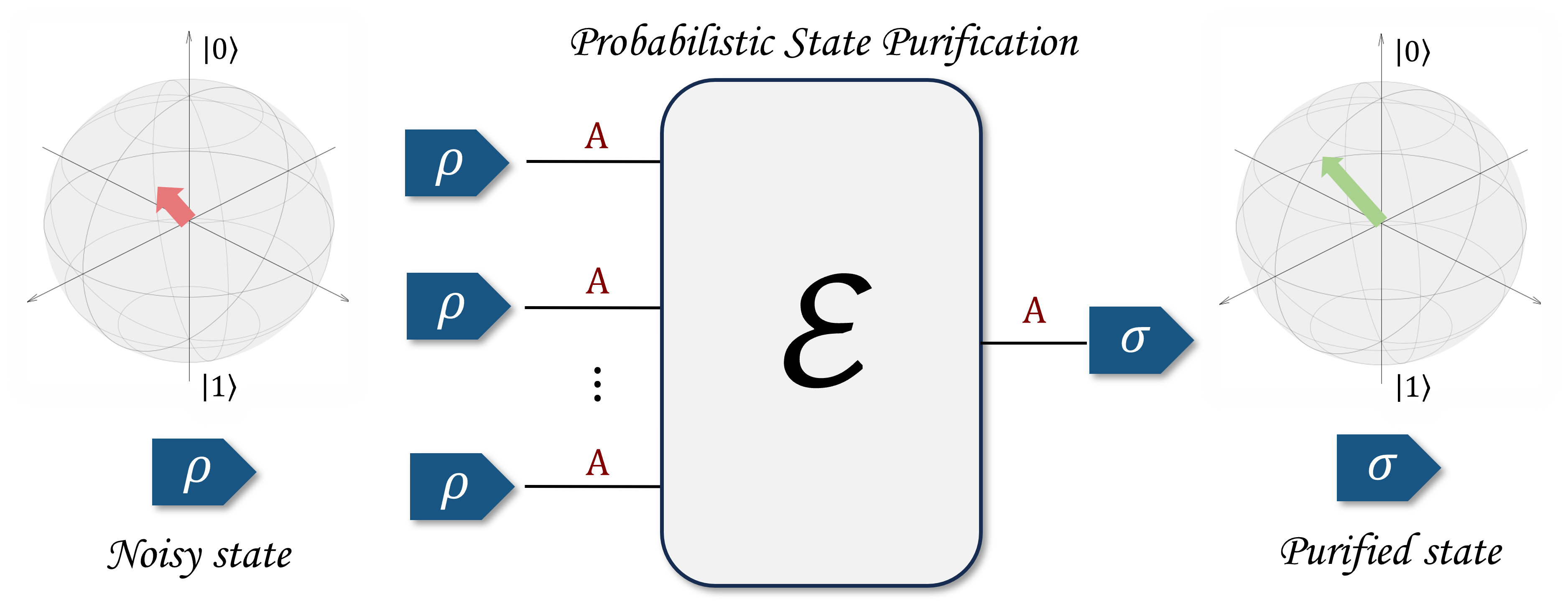}
         \caption{Framework of the probabilistic purification protocol. this protocol is designed to generate a purer state $\sigma$ by  $n$ copies of the noisy state $\rho$ with a certain probability $\tr[\sigma^\prime]$, where $\sigma^\prime:=\cE(\rho^{\ox n})$ and $\sigma:=\sigma^\prime/\tr[\sigma^\prime]$.}
    \label{fig:framework}
\end{figure}

Notice that probabilistic purification protocols are characterized by two key measures: the fidelity between the output state of purification protocols and the ideal pure state averaged over all state in Hilbert space, and the associated successful probability.  With this in mind, attention is directed towards those protocols that can achieve the highest fidelity for a given probability, as well as those that can attain the highest probability for a given fidelity. To formalize this, we introduce the following definitions:

\begin{definition}[Maximal universal fidelity for given probability]\label{def:optimal_fidelity}Let $\cN$ be the noise channel applied to system $A$ and $n$ be the number of copies of the noisy state. For any given probability $p\in(0,1)$, the maximal universal fidelity over all possible protocols is defined as
\begin{equation}
    \begin{aligned}
        F_{\cN}(n,p):=\max\left\{\frac{\int \tr[\sigma_{\psi}\psi]\,d\psi}{\int \tr[\sigma_{\psi}]\,d\psi}\Big\arrowvert\int\tr[\sigma_{\psi}]\,d\psi=p,\,\cE_{A^n\to  A}\in\rm{CPTN}\right\},
    \end{aligned}
\end{equation}
where $\sigma_{\psi}:=\cE_{A^n\to A}(\cN(\psi)^{\otimes n})$, and the integral is taken with respect to the Haar measure over all unit vectors in the Hilbert space $\cH_A$.
\end{definition}
Similarly, the optimal purification probability is given by the following definition: 
\begin{definition}[Maximal universal probability for given fidelity]\label{def:optimal_probability}Let $\cN$ be the noise channel applied to system $A$ and $n$ be the number of copies of the noisy state. For any given fidelity $f\in(0,1)$, the maximal universal probability over all possible protocols is defined as
\begin{equation}
    \begin{aligned}
        P_{\cN}(n,f):=\max\left\{\int\tr[\sigma_{\psi}]\,d\psi\Big\arrowvert \frac{\int \tr[\sigma_{\psi}\psi]\,d\psi}{\int \tr[\sigma_{\psi}]\,d\psi}=f,\,\cE_{A^n\to A}\in\rm{CPTN}\right\},
    \end{aligned}
\end{equation}
where $\sigma_{\psi}:=\cE_{A^n\to A}(\cN(\psi)^{\otimes n})$, and the integral is taken with respect to the Haar measure over all unit vectors in the Hilbert space $\cH_A$.
\end{definition}

A CPTN map is identified as the optimal probabilistic purification protocol for a given purification probability $p$ if it attains the maximum fidelity $F_{\cN}(n,p)$. Similarly, it is labeled as the optimal probabilistic purification protocol for a given purification fidelity $f$ if a CPTN map achieves the highest probability $P_{\cN}(n,f)$. These two measures provide a framework for a detailed exploration of probabilistic purification protocols and trade-offs between purification fidelity and probability.

\subsection{Semidefinite programs}
We investigate probabilistic purification protocols using the semidefinite programs technology, which can be efficiently solved by the interior point method~\cite{boyd2004convex} with a runtime polynomial in system dimension $d$. Specifically, two SDPs are formalized to calculate the maximal fidelity and the probability, respectively. We demonstrate the former through the following result:
\begin{proposition}\label{prop:primal_dual_sdp_fid}
Let $\cN$ be the noise channel applied to system $A$ and $n$ be the number of copies of an unknown noisy state. For a given purification probability $p\in(0,1)$, the maximal fidelity is given by the following SDP:
\begin{equation}\label{sdp:primal_dual_fid}
\begin{small}
\begin{aligned}
&\underline{\textbf{Primal Program}}\\
    F_{\cN}(n,p)=\max\;&\tr[J^{\cE}_{{A^nA}}Q^{T_{A^n}}_n]/p\\
    {\rm s.t.}\;& \tr[J^{\cE}_{{A^nA}}R^{T_{A^n}}_n]= p,\\
    &J^{\cE}_{A^nA}\geq 0,\,\tr_{A}[J_{A^n A}^{\cE}] \leq I_{A^n}
\end{aligned}
\end{small}
\quad
\begin{small}
\begin{aligned}
&\underline{\textbf{Dual Program}}\\
    \min\;& -x-\tr[Y_{A^n}]/p\\
    {\rm s.t.}\;& Q^{T_{A^n}}_n+R^{T_{A^n}}_n x+Y_{A^n}\otimes I_{A}\leq 0\\
    &Y_{A^n}\leq 0,
\end{aligned}
\end{small}
\end{equation}
where $J^{\cE}_{A^n A}$ denotes the Choi operator of CPTN map $\cE_{A^n\to A}$, $T_{A^n}$ denotes partial transpose on system $A^n$, $Q_n:=\cN^{\otimes n}\otimes id(\Pi_{n+1}/D(n+1,d))$, $R_n:=\cN^{\otimes n}(\Pi_n/D(n,d))\otimes I_{A}$, $D(n,d):=\tbinom{n+d-1}{n}$, and $\Pi_n$ is the projector on the symmetric subspace of $\cH^{\otimes n}_A$.
\end{proposition}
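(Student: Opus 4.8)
The plan is to recast Definition~\ref{def:optimal_fidelity} as a linear program over the Choi operator $J^{\cE}_{A^nA}$ and then dualize. I would first use the standard Choi correspondence for a CPTN map $\cE_{A^n\to A}$: complete positivity is equivalent to $J^{\cE}_{A^nA}\ge 0$ and the trace non-increasing property to $\tr_A[J^{\cE}_{A^nA}]\le I_{A^n}$, which are exactly the two feasibility constraints of the primal. The action of the map is recovered through $\cE(X)=\tr_{A^n}\!\big[(X^{T_{A^n}}\ox I_A)\,J^{\cE}_{A^nA}\big]$, so writing $X_\psi:=\cN^{\otimes n}(\psi^{\ox n})$ with $\psi=\ketbra{\psi}{\psi}$ gives $\tr[\sigma_\psi\,\psi]=\tr[J^{\cE}_{A^nA}(X_\psi^{T_{A^n}}\ox\psi)]$ and $\tr[\sigma_\psi]=\tr[J^{\cE}_{A^nA}(X_\psi^{T_{A^n}}\ox I_A)]$. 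This turns both Haar integrals in the definition into linear functionals of $J^{\cE}_{A^nA}$.

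The substantive ingredient is the Haar-moment identity $\int \psi^{\ox m}\,d\psi=\Pi_m/D(m,d)$. Since $\psi^{\ox n}\ox\psi=\psi^{\ox(n+1)}$, integrating the numerator collapses the extra output copy into the $(n+1)$-fold symmetric projector, and because the partial transpose on $A^n$ and the channel $\cN^{\otimes n}\ox\id$ commute with the integral I obtain $\int X_\psi^{T_{A^n}}\ox\psi\,d\psi=\big[\cN^{\otimes n}\ox\id(\Pi_{n+1}/D(n+1,d))\big]^{T_{A^n}}=Q_n^{T_{A^n}}$; the denominator similarly yields $\int X_\psi^{T_{A^n}}\ox I_A\,d\psi=R_n^{T_{A^n}}$. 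Hence the constraint $\int\tr[\sigma_\psi]\,d\psi=p$ becomes $\tr[J^{\cE}_{A^nA}R_n^{T_{A^n}}]=p$, and since this fixes the denominator of the fidelity ratio to the constant $p$, the objective linearizes to $\tr[J^{\cE}_{A^nA}Q_n^{T_{A^n}}]/p$, reproducing the primal program. Primal feasibility for every $p\in(0,1)$ is immediate: any trace-preserving $\cE$ gives $\int\tr[\sigma_\psi]\,d\psi=1$, and scaling $J^{\cE}_{A^nA}\mapsto \lambda J^{\cE}_{A^nA}$ lets the probability sweep all of $(0,1]$.

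To obtain the dual I would form the Lagrangian with a real multiplier for the scalar equality $\tr[J^{\cE}_{A^nA}R_n^{T_{A^n}}]=p$ and a positive semidefinite multiplier $W_{A^n}\ge 0$ for $I_{A^n}-\tr_A[J^{\cE}_{A^nA}]\ge 0$. Maximizing the Lagrangian over $J^{\cE}_{A^nA}\ge 0$ forces the operator coefficient of $J^{\cE}_{A^nA}$ to be negative semidefinite, which produces a minimization problem that coincides with the stated dual after the reparametrization $Y_{A^n}=-p\,W_{A^n}\le 0$ together with the corresponding rescaling of the scalar multiplier. Strong duality, i.e.\ equality of the two programs and attainment of the primal maximum, then follows from Slater's condition applied to the dual: taking $x=0$ and $Y_{A^n}=-cI_{A^n}$ with $c>\|Q_n^{T_{A^n}}\|$ makes both $Y_{A^n}<0$ and $Q_n^{T_{A^n}}+Y_{A^n}\ox I_A<0$ strictly, so the dual is strictly feasible and bounded, giving a zero duality gap.

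I expect the Choi-calculus bookkeeping --- keeping the partial transposes on $A^n$ and the input/output labels consistent across the numerator, denominator, and the map-action formula --- to be the most error-prone part, whereas the single genuinely non-mechanical step is recognizing that the Haar average of $\psi^{\ox n}\ox\psi$ is the normalized symmetric projector $\Pi_{n+1}/D(n+1,d)$, which is what ties the numerator to $Q_n$; verifying Slater to upgrade weak duality to strong duality is routine but should be stated explicitly.
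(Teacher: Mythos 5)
Your proposal is correct and follows essentially the same route as the paper: express both Haar integrals as linear functionals of the Choi operator via the moment identity $\int\psi^{\ox m}\,d\psi=\Pi_m/D(m,d)$ (the paper's invocation of Schur's lemma), encode CPTN as $J^{\cE}_{A^nA}\ge 0$ and $\tr_A[J^{\cE}_{A^nA}]\le I_{A^n}$, and dualize the resulting linear program with Slater's condition closing the gap, exactly as in the paper's Appendix~\ref{appendix:dual_sdp}. Your explicit strictly feasible dual point and the scaling argument showing primal feasibility for every $p\in(0,1)$ are slightly more detailed than what the paper records, but they do not constitute a different method.
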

\begin{proof}
We demonstrate the objective function and the corresponding constraints of the primal SDP, respectively. By definition, the objective function of the primal SDP is as follows:
\begin{equation}
    \begin{aligned}
        \int \tr\left[\sigma_{\psi}\psi\right]d\psi&=\int \tr\left[\cE_{A^n\to A}(\cN(\psi)^{\otimes n})\psi\right]d\psi\\
        &=\int \tr\left[(J^\cE_{A^n  A})^{T_{A^n}}(\cN(\psi)^{\otimes n}\otimes \psi)\right] \,d\psi\\
        &=\tr\left[(J^{\cE}_{{A^n A}})^{T_{A^n}}\int\cN(\psi)^{\otimes n}\otimes\psi\,d\psi\right].
    \end{aligned}
\end{equation}
Using Schur lemma~\cite{harrow2013church,khatri2020principles}, that is the property, $\Pi_n=\tbinom{n+d-1}{n}\int \psi^{\otimes n}\,d\psi$, one can find that $\int\cN(\psi)^{\otimes n}\otimes\psi\,d\psi=\cN^{\otimes n}\otimes id(\Pi_{n+1}/D(n+1,d))$. Similarly, $\int\cN(\psi)^{\otimes n}\otimes I_A\,d\psi=\cN^{\otimes n}(\Pi_n/D(n,d))\otimes I_{A}$. Furthermore, the constraints derive from the fact that $\cE_{A^n\to A}$ is a CPTN map. Therefore, we complete this proof.
\end{proof}

The derivation about dual SDP is detailed in Appendix~\ref{appendix:dual_sdp}. The primal and dual SDPs proposed in Proposition~\ref{prop:primal_dual_sdp_fid} allow us to characterize the optimal probabilistic purification protocol for a given purification probability. Similarly, we present an SDP to calculate the maximal probability for a fixed purification fidelity.
\begin{equation}\label{sdp:primal_dual_prob}
\begin{small}
\begin{aligned}
&\underline{\textbf{Primal Program}}\\
    P_{\cN}(n,f)=\max\;&\tr[J^{\cE}_{{A^n A}}R^{T_{A^n}}_n]\\
    {\rm s.t.}\;& \tr[J^{\cE}_{{A^n A}}Q^{T_{A^n}}_n]=\tr[J^{\cE}_{{A^n A}}R^{T_{A^n}}_n]f,\\
    &J^{\cE}_{A^n A}\geq 0,\,\tr_{ A}[J_{A^n A}^{\cE}] \leq I_{A^n}
\end{aligned}
\end{small}
\quad
\begin{small}
\begin{aligned}
&\underline{\textbf{Dual Program}}\\
    \min\;& -\tr[Y_{A^n}]\\
    {\rm s.t.}\;&(1-xf)R^{T_{A^n}}_n+Q^{T_{A^n}}_n x+
     Y_{A^n}\otimes I_{A} \leq 0\\
    &Y_{A^n}\leq 0,
\end{aligned}
\end{small}
\end{equation}
A derivation can be found in Appendix~\ref{appendix:dual_sdp}. There is a similar SDP proposed in~\cite{fiuravsek2004optimal}, which can obtain the maximal probability for a given maximal fidelity. However,  we have extended this result in this work. The SDP in Eq.~\eqref{sdp:primal_dual_prob} allows us to calculate the maximal probability for any given purification fidelity, which means that one can observe the probability trend as purification fidelity changes. In summary, the SDPs in Eq.~\eqref{sdp:primal_dual_fid} and Eq.~\eqref{sdp:primal_dual_prob} facilitate our exploration of the trade-off between the fidelity and probability. 

It is worth emphasizing that the versatility of our SDPs enables them to operate effectively across a wide range of noise channels. Moreover, their performance becomes even more remarkable when focusing on a particular noise channel. In the subsequent section, we delve into the speciﬁcs.


\subsection{Depolarizing noise}
We focus on the depolarizing noise, which can describe average noise for large-scale circuits involving many qudits and gates~\cite{urbanek2021mitigating}. It is crucial for quantum computation and quantum communication to overcome this noise. Utilizing the SDPs in Eq.~\eqref{sdp:primal_dual_fid} and Eq.~\eqref{sdp:primal_dual_prob}, the explicit purification protocol, the fidelity and the corresponding probability are explored. 

For the $n=2$ case, which means two copies of a $d$-dimensional noisy state as the input states for purification protocols, we will demonstrate that there exists a purification protocol that can obtain the maximal fidelity $f_2$ with probability $p_2$. The probability $p_2$ is maximal over all protocols that can achieve the fidelity $f_2$. Furthermore, we generalize the purification protocol to the $n>2$ case and demonstrate its optimality by numerical experiments. The following conclusion characterizes the purification protocol for the $n=2$ case.

\begin{theorem}[Golden point of probabilistic purification]\label{thm:opt_fidelity_probability}
Let $\delta$ be the noise parameter of the depolarizing channel $\cN_\delta$, and $\Lambda:=\operatorname{Diag}(\lambda_0\,\cdots,\lambda_{d-1})$ be the eigenvalue matrix of a $d$-dimensional noisy state, where  $\lambda_0:=1-\frac{d-1}{d}\delta$, $\lambda_j:=\frac{\delta}{d}$, for $j=1,\cdots,d-1$. Then, we have
\begin{equation}
    \begin{aligned}
        F_{Depo}(2,p_2)=f_2,\quad P_{Depo}(2,f_2)=p_2,
    \end{aligned}
\end{equation}
where $p_2 := \frac{1}{2}(1+\tr[\Lambda^2])$, $f_2:=\frac{1}{2p_2}(\lambda_0+\lambda_0^2)$.
\end{theorem}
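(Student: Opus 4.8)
The plan is to prove both equalities by the two-sided method natural to the semidefinite programs of Proposition~\ref{prop:primal_dual_sdp_fid} and Eq.~\eqref{sdp:primal_dual_prob}: produce one explicit protocol that is primal-feasible for both programs and attains the pair $(p_2,f_2)$, and then certify optimality by exhibiting matching dual solutions, so that weak duality forces equality.

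For the achievability direction I would use the symmetric-subspace projection of~\cite{barenco1997stabilization}, i.e. the CPTN map $\cE(\cdot)=\tr_{A_2}[\Pi_2(\cdot)\Pi_2]$, where $\Pi_2=\tfrac12(I+\mathbf{S})$ and $\mathbf{S}$ is the swap on $\cH_A\ox\cH_A$. Writing $\rho=\cN_\delta(\psi)$ and using that $\rho\ox\rho$ commutes with $\Pi_2$ (so $\Pi_2(\rho\ox\rho)\Pi_2=\Pi_2(\rho\ox\rho)$) together with the partial-trace identities $\tr_{A_2}[\rho\ox\rho]=\rho$ and $\tr_{A_2}[\mathbf{S}(\rho\ox\rho)]=\rho^2$, the unnormalized output is $\sigma_\psi=\tfrac12(\rho+\rho^2)$. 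Since the spectrum $\Lambda$ of $\rho$ is independent of $\psi$ and $\psi$ is the $\lambda_0$-eigenvector of $\rho$, one gets $\int\tr[\sigma_\psi]\,d\psi=\tfrac12(1+\tr[\Lambda^2])=p_2$ and $\int\tr[\sigma_\psi\psi]\,d\psi=\tfrac12(\lambda_0+\lambda_0^2)$, whence the ratio is $f_2$. Hence this single protocol shows $F_{Depo}(2,p_2)\ge f_2$ and $P_{Depo}(2,f_2)\ge p_2$.

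The substantive half is the two converse inequalities. Here I would first invoke covariance: twirling any feasible map by $\cE^{\mathrm{tw}}(X)=\int U^\dagger\,\cE(U^{\ox2}XU^{\dagger\ox2})\,U\,dU$ preserves the CPTN property and, because $\cN_\delta$ commutes with unitary conjugation and both figures of merit are Haar-averaged over $\psi$, preserves the probability and the fidelity. Thus it suffices to search over covariant protocols, whose transposed Choi operators lie in the commutant of $\{U^{\ox3}\}$, i.e. in the span of the permutation operators $\mathbf{P}_3(c)$, $c\in\cS_3$; the operators $Q_2$ and $R_2$ lie there as well. For the dual of the fidelity program I would take $Y_{A^2}=y_s\Pi_2+y_a(I-\Pi_2)$ with $y_s,y_a\le0$, which is transpose-invariant, so applying $T_{A^2}$ to the whole dual inequality turns it into $Q_2+xR_2+Y_{A^2}\ox I_A\le0$ — a genuine positivity constraint on an element of the $\cS_3$ group algebra acting on $(\CC^d)^{\ox3}$. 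Decomposing this space into its symmetric, antisymmetric, and mixed isotypic blocks reduces the constraint to a few scalar (and one small-matrix) inequalities; I would then fix $x,y_s,y_a$ by complementary slackness with the symmetric-projection optimizer and by matching the dual objective $-x-\tr[Y_{A^2}]/p_2$ to $f_2$. An entirely analogous certificate for the dual in Eq.~\eqref{sdp:primal_dual_prob} yields $P_{Depo}(2,f_2)\le p_2$; equivalently, solving the reduced program in closed form for every $p\in(0,1)$ produces the full fidelity--probability trade-off curve, and the two equalities are its value and inverse read off at the golden point.

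The main obstacle is the spectral bookkeeping behind the dual inequality: although covariance collapses the search to three scalars, one must compute the eigenvalues of the depolarized, permutation-built operators $Q_2$ and $R_2$ on each isotypic sector of $(\CC^d)^{\ox3}$ and check $Q_2+xR_2+Y_{A^2}\ox I_A\le0$ block by block, while verifying that the sector on which the inequality is tight is exactly the range of the symmetric-projection protocol. A secondary subtlety is that the two equalities come from two distinct SDPs, so one must confirm that the same protocol is simultaneously certified optimal for both — either through two consistent dual certificates or through strict monotonicity of the trade-off curve at $(p_2,f_2)$. The achievability computation and the twirling reduction, by contrast, are routine.
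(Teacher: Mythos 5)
Your achievability half matches the paper's: the paper also takes the symmetric-projection protocol as the primal feasible point, computes $\tr[J^{\cE}_{A^2A}Q_2^{T_{12}}]=\bra{0}\tr_2[\Pi_2\Lambda^{\ox2}\Pi_2^\dagger]\ket{0}$ and $\tr[J^{\cE}_{A^2A}R_2^{T_{12}}]=\tr[\Pi_2\Lambda^{\ox2}]$, and reads off $(p_2,f_2)$ from Lemma~\ref{lem:recursion_expression}; your swap-operator computation of $\sigma_\psi=\tfrac12(\rho+\rho^2)$ is the same calculation in different notation. The converse is also attacked the same way in principle, by exhibiting dual feasible points, but the paper's certificates are simpler than your ansatz: $x=-f_2$, $Y=0$ for the fidelity program, and a limiting certificate ($x\to\infty$, $Y=-\Pi_2p_2/D(2,d)$) for the probability program, so that both dual constraints collapse to the single operator inequality $f_2R_2^{T_{12}}-Q_2^{T_{12}}\geq0$ and no complementary-slackness tuning of $y_s,y_a$ is required. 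Your twirling reduction, while correct, is also unnecessary once a dual certificate is in hand, since weak duality bounds all primal feasible maps, covariant or not.

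The genuine gap is in how you propose to verify that operator inequality. You assert that applying $T_{A^2}$ to the dual constraint turns it into $Q_2+xR_2+Y_{A^2}\ox I_A\leq0$, an inequality in the plain $\cS_3$ group algebra to be checked on the symmetric, antisymmetric, and mixed isotypic blocks. It is true that $Q_2$ and $R_2$ lie in the span of the $\mathbf{P}_3(c)$, but a \emph{partial} transpose does not preserve the semidefinite order --- for instance the antisymmetric projector $I-\Pi_2$ is positive while its partial transpose has the negative eigenvalue $(1-d)/2$ --- so the transposed inequality is not equivalent to the dual constraint, and the block-by-block check you describe would certify the wrong statement. The order-preserving move is the \emph{full} transpose: since $R_2$ acts as the identity on the third system, $f_2R_2^{T_{12}}-Q_2^{T_{12}}\geq0$ is equivalent to $f_2R_2-Q_2^{T_3}\geq0$, an element of the span of the partially transposed permutations $\mathbf{P}_3(c)^{T_3}$, i.e.\ the commutant of $U\ox U\ox\bar{U}$ rather than of $U^{\ox3}$. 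The positive cone of that algebra is not read off from the $\cS_3$ isotypic decomposition; the paper handles it in Lemma~\ref{lem:case_of_two} by expanding $f_2R_2-Q_2^{T_3}$ in the Eggeling--Werner basis $\{S_+,S_-,S_0,S_1,S_2,S_3\}$ and invoking their criterion ($s_+,s_-,s_0\geq0$ and $s_1^2+s_2^2+s_3^2\leq s_0^2$). That computation in the correct algebra is the missing ingredient in your plan.
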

\begin{proof}
Firstly, we will show $f_2 \leq F_{Depo}(2,p_2)$ using the primal SDP in Eq.~\eqref{sdp:primal_dual_fid}. Specifically, it is straightforward to see that the following Choi operator is a feasible solution:
\begin{equation}
    \begin{aligned}
        J^{\cE}_{A^2A}:=\tr_{A_{O}}[J^{\cM_2}_{A_{I}^2A_{O}^2}],
    \end{aligned}
\end{equation}
where $J^{\cM_2}_{A_{I}^2A_{O}^2}$ denotes the Choi operator of CPTN map $\cM_2(\cdot)$, $\cM_2(\cdot):=\Pi_2(\cdot)\Pi^\dagger_2$ and $\Pi_2$ is the projector on the symmetric subspace of $\cH^{\otimes 2}_A$. $A_{I}$ and $A_{O}$ are labeled for input  and output system $A$ to distinguish them. Then, one can obtain the objective value $f_2$ using this solution, as follows:
\begin{equation}
    \begin{aligned}
        \tr\left[J^{\cE}_{A^2 A}Q_2^{T_{12}}\right]&=\tr\left[(J^{\cE}_{A^2 A})^{T_{12}}\cN^{\otimes 2}_\delta\otimes id(\Pi_{3}/D(3,d))\right]\\
        &=\int \tr\left[(J^{\cE}_{A^2A})^{T_{12}}\cN_\delta(\psi)^{\otimes 2}\otimes \psi\right] \,d\psi\\
        &=\int \tr\left[\psi\tr_{2}\left[\Pi_2\cN_\delta(\psi)^{\otimes 2}\Pi_2^\dagger\right]\right]\,d\psi\\
        &=\bra{0}\tr_{2}\left[\Pi_2\Lambda^{\otimes 2}\Pi_2^\dagger\right]\ket{0},
    \end{aligned}
\end{equation}
where $T_{12}$ denotes the partial transpose on system $A^2$, and the last equality follows from the fact that $\bra{0}\tr_{2}[\Pi_2\Lambda^{\otimes 2}\Pi_2^\dagger]\ket{0}=\bra{\psi}\tr_{2}[\Pi_2\cN_\delta(\psi)^{\otimes 2}\Pi_2^\dagger]\ket{\psi}$ for any pure state $\psi$. Furthermore, we have $\tr[J^{\cE}_{A^2 A}R_2^{T_{12}}]= \tr[\Pi_2\Lambda^{\otimes 2}]$ in the same way. According to Lemma~\ref{lem:recursion_expression}, we further obtain the recursive forms of probability and fidelity, which means $F_{Depo}(2,p_2)\geq f_2$.

Second, we demonstrate that $\{-f_2,\mathbf{0}\}$ is a feasible solution of the dual SDP in Eq.~\eqref{sdp:primal_dual_fid}. To be specific, we observe the constraint as follows:
\begin{equation}
    \begin{aligned}
    f_2R^{T_{12}}_2 - Q_2^{T_{12}}\geq 0.
    \end{aligned}
\end{equation}
Notice that it is equivalent to examining the positivity of the operator $f_2R_2-Q_2^{T_3}$. According to Lemma~\ref{lem:case_of_two}, we have $f_2R_2-Q_2^{T_3}\geq 0$, which implies $F_{Depo}(2,p_2)\leq f_2$. Combining the primal part and the dual part, we conclusively establish that $F_{Depo}(2,p_2)=f_2$.

Similarly, one can find that the $J^{\cE}_{A^2 A}$ and $\{+\infty,-\Pi_2p_2/D(2,d)\}$ are the feasible solutions of the primal and dual SDP in Eq.~\eqref{sdp:primal_dual_prob} for calculating $P_{Depo}(2,f_2)$, respectively, where $D(2,d):=\frac{(d+1)d}{2}$. Notably, for a sufficiently large $x$, the following inequality is satisfied: 
\begin{equation}
    \begin{aligned}
        (1-xf_2)R^{T_{12}}_2+Q_2^{T_{12}}x+ Y_{A^2}\otimes I_{A} \leq 0,
    \end{aligned}
\end{equation}
which is equivalent to $f_2R^{T_{12}}_2-Q_2^{T_{12}}\geq 0$. Similarly, combining the primal and dual parts, we have $P_{Depo}(2,f_2)=p_2$. Thus, we complete the proof.
\end{proof}

Theorem \ref{thm:opt_fidelity_probability} elucidates the tight connection between $f_2$ and $p_2$. Specifically, when the success probability $p_2$ is given, the solution to the SDP problem in Eq.~\eqref{sdp:primal_dual_fid} is uniquely determined as $f_2$. Similarly, given the value of $f_2$, the corresponding probability $p_2$  can be obtained from the solution of the SDP problem in Eq.~\eqref{sdp:primal_dual_prob}. Notice that the latter clarifies that the probability $p_2$ is maximal over all protocols that can achieve the fidelity $f_2$. However, a natural issue is whether there exists a purification protocol that can obtain higher fidelity with a certain success probability $p$, i.e., $F_{Depo}(2,p)>f_2$. We answer this question using the following conclusion:
\begin{proposition}[Maximal universal fidelity]\label{prop:opt_fidelity}
Let $\delta$ be the noise parameter of the depolarizing channel, and $\Lambda:=\operatorname{Diag}(\lambda_0\,\cdots,\lambda_{d-1})$ be the eigenvalue matrix of a $d$-dimensional noisy state, where  $\lambda_0:=1-\frac{d-1}{d}\delta$, $\lambda_j:=\frac{\delta}{d}$, for $j=1,\cdots,d-1$. Then, the maximal fidelity across all possible probabilities is given by
\begin{equation}
    \begin{aligned}
        f_2=\max_{\,p\in(0,1)}F_{Depo}(2,p),
    \end{aligned}
\end{equation}
where $f_2:=\frac{1}{2p_2}(\lambda_0+\lambda_0^2)$ and $p_2 := \frac{1}{2}(1+\tr[\Lambda^2])$.
\end{proposition}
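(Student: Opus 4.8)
The plan is to prove the two inequalities $f_2 \le \max_{p\in(0,1)} F_{Depo}(2,p)$ and $F_{Depo}(2,p)\le f_2$ for every $p\in(0,1)$. The first is immediate: Theorem~\ref{thm:opt_fidelity_probability} already gives $F_{Depo}(2,p_2)=f_2$ at the particular probability $p_2=\tfrac12(1+\tr[\Lambda^2])$, so the maximum over $p$ is at least $f_2$ and is in fact attained at $p=p_2$. Thus the entire content of the proposition reduces to the uniform upper bound $F_{Depo}(2,p)\le f_2$ for all admissible $p$, which I would establish via weak duality of the SDP in Eq.~\eqref{sdp:primal_dual_fid}.

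The key observation is that the dual certificate used in Theorem~\ref{thm:opt_fidelity_probability} is independent of $p$. Recall the dual program has variables $x\in\RR$ and $Y_{A^2}\le 0$, with constraint $Q_2^{T_{12}}+R_2^{T_{12}}x+Y_{A^2}\otimes I_A\le 0$ and objective $-x-\tr[Y_{A^2}]/p$. I would take the same feasible point $\{x,Y_{A^2}\}=\{-f_2,\mathbf 0\}$ as in the proof of the theorem. Setting $Y_{A^2}=0$ trivially satisfies $Y_{A^2}\le 0$, and the remaining constraint collapses to $f_2 R_2^{T_{12}}-Q_2^{T_{12}}\ge 0$, equivalently $f_2 R_2-Q_2^{T_3}\ge 0$, which is exactly the positivity guaranteed by Lemma~\ref{lem:case_of_two}. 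Crucially, neither this constraint nor the choice of $\{x,Y_{A^2}\}$ references $p$, so the point $\{-f_2,\mathbf 0\}$ is dual-feasible simultaneously for every $p\in(0,1)$.

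It then remains to evaluate the dual objective at this point. With $Y_{A^2}=0$ and $x=-f_2$ the objective equals $-x-\tr[Y_{A^2}]/p=f_2-0=f_2$, again independent of $p$. Since the primal is a maximization and the dual a minimization, weak duality yields $F_{Depo}(2,p)\le f_2$ for every $p\in(0,1)$. Combining this uniform ceiling with the attainment $F_{Depo}(2,p_2)=f_2$ from Theorem~\ref{thm:opt_fidelity_probability} shows $\max_{p\in(0,1)}F_{Depo}(2,p)=f_2$, as claimed.

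I do not expect a genuine obstacle here: all of the analytic difficulty---verifying the semidefinite inequality $f_2 R_2-Q_2^{T_3}\ge 0$---was already discharged in Theorem~\ref{thm:opt_fidelity_probability} through Lemma~\ref{lem:case_of_two}, and the present statement is essentially the structural remark that the associated dual certificate carries a $p$-independent objective value. The only point requiring a little care is confirming that $p$ enters the dual objective solely through the term $-\tr[Y_{A^2}]/p$, so that choosing $Y_{A^2}=0$ removes every trace of $p$-dependence from both the constraint and the attained value; with that checked, the bound is uniform over $p$ and the proof closes.
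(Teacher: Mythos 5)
Your proof is correct, but it is structured differently from the paper's. The paper first establishes that $F_{Depo}(2,p)$ is non-increasing in $p$ by a primal rescaling argument (if $J_q$ is optimal at probability $q$ and $p\le q$, then $\tfrac{p}{q}J_q$ is feasible at probability $p$ with the same objective value), and then asserts that for every $p\le p_2$ the rescaled operator $\tfrac{p}{p_2}J^{\cE}_{A^2A}$ is in fact optimal, so that the fidelity plateaus at $f_2$ on $(0,p_2]$; combined with monotonicity this yields the claim. You instead go straight for a uniform upper bound: you observe that the dual certificate $\{x,Y_{A^2}\}=\{-f_2,\mathbf{0}\}$ from Theorem~\ref{thm:opt_fidelity_probability} is feasible for the dual of Eq.~\eqref{sdp:primal_dual_fid} simultaneously for every $p$ (the constraint does not involve $p$, and $Y_{A^2}=0$ kills the only $p$-dependent term in the objective), so weak duality gives $F_{Depo}(2,p)\le f_2$ for all $p\in(0,1)$, with equality at $p_2$ by the theorem. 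Both routes ultimately rest on the same positivity statement $f_2R_2-Q_2^{T_3}\ge 0$ from Lemma~\ref{lem:case_of_two}. Your version is more economical and makes explicit the optimality check that the paper leaves as ``one can check''; what it does not deliver is the monotone fidelity--probability trade-off itself, which the paper's rescaling step establishes as a by-product and which is used to interpret Fig.~\ref{fig:f_P_relationship}. No gap in either direction of your argument.
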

\begin{proof}
    Firstly, we will demonstrate the purification fidelity $F_{Depo}(2,p)$ is a decreasing function with respect to the purification probability $p\in(0,1)$, i.e., $F_{Depo}(2,p)\geq F_{Depo}(2,q)$ for $p\leq q \in(0,1)$. Specifically, let $J_q$ be the optimal solution of the primal SDP in Eq.~\eqref{sdp:primal_dual_fid} for fixing purification probability $q$, that is, $F_{Depo}(2,q)=\tr[J_qQ_2^{T_{12}}]/q$. Then, the Choi operator $J_p:=\frac{p}{q}J_q$ is a feasible solution of the primal SDP in Eq.~\eqref{sdp:primal_dual_fid} for fixing purification probability $p$, which means 
    \begin{equation}
            \begin{aligned}
        F_{Depo}(2,p)\geq\tr[J_pQ_2^{T_{12}}]/p=\tr[J_qQ_2^{T_{12}}]/q=F_{Depo}(2,q).
    \end{aligned}
    \end{equation}
    
    Secondly, we demonstrate that even if the purification probability is reduced, it will not increase the purification fidelity, that is, $F_{Depo}(2,p)=f_2$ for all $p\leq p_2$. In specific, one can check that the Choi operator $J_p:=\frac{p}{p_2}J^{\cE}_{A^2A}$ is the optimal solution of the primal SDP in Eq.~\eqref{sdp:primal_dual_fid} with respects to probability $p$, where the definition of $J^{\cE}_{A^2 A}$ is proposed in the proof of Theorem~\ref{thm:opt_fidelity_probability}. Therefore, we obtain that $f_2$ is the maximal purification fidelity for all given probabilities.
\end{proof}

Theorem~\ref{thm:opt_fidelity_probability} and Proposition~\ref{prop:opt_fidelity} imply that $\cE_{A^2\to A}^\ast:=\tr_2\circ\cM_2$ is the optimal protocol that can achieve the \textit{golden point}, i.e., the fidelity $f_2$ is optimal and the corresponding probability $p_2$ is also maximal for all probabilistic protocols that can obtain the fidelity $f_2$.
\begin{figure}[t]
    \centering
    \subfigure[]{
    \includegraphics[width=0.45\textwidth]{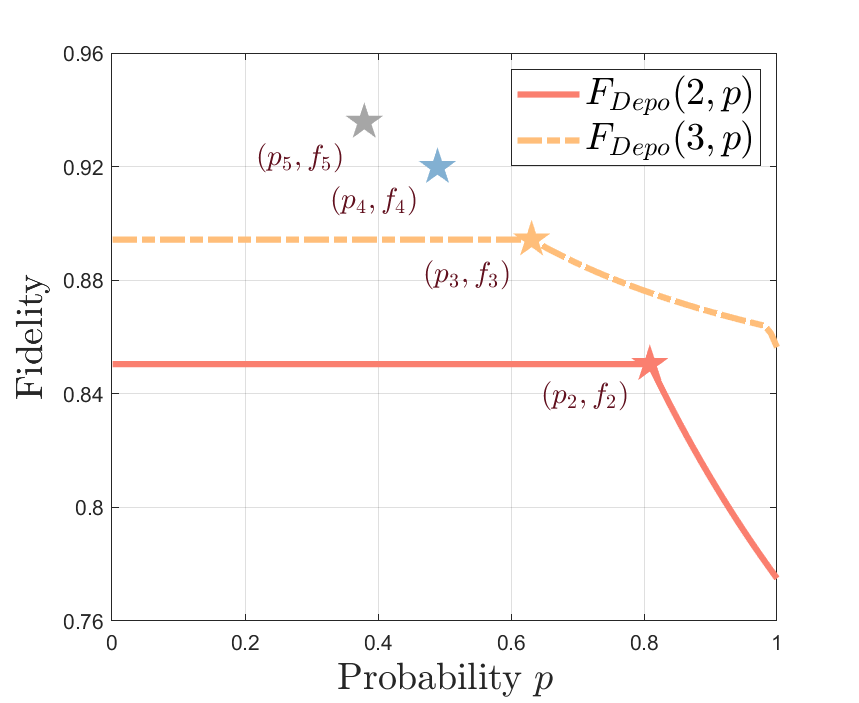}
    \label{fig:f_P_relationship}}
    \subfigure[]{
    \includegraphics[width=0.48\textwidth]{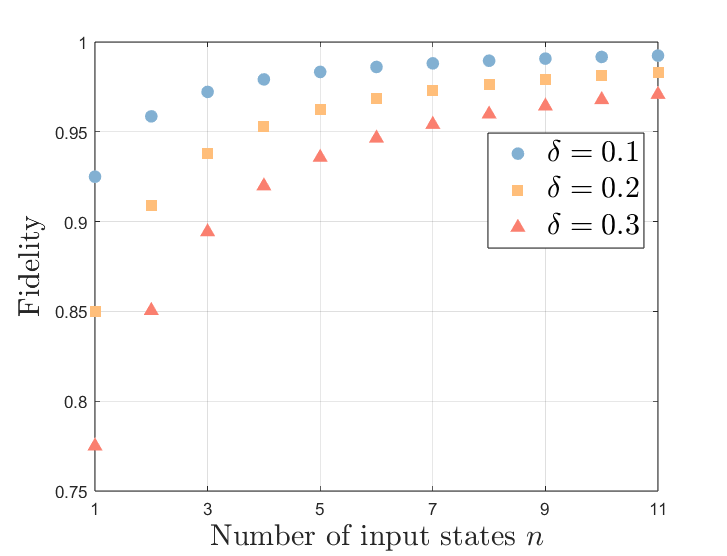}
    \label{fig:fn_n_delta}}
    \caption{Universal purification fidelity $F_{\text{Depo}}(n,p)$ with different number of input depolarized states $n$ and success probability $p$. (a) describes the relation between the general protocol $\cE^\ast_{A^n\to  A}$ in Remark~\ref{rem:optimal_protocol} and the protocols obtained from the SDP in Eq.\eqref{sdp:primal_dual_fid} when the error parameter $\delta$ is $0.3$ and the dimension $d=4$. The star-markers denote the protocol $\cE^\ast_{A^n\to A}$ follow from the probability $p_n := \frac{1}{n}\sum_{j=1}^n p_{n-j} \tr\left[ \Lambda^j\right]$ and fidelity $f_n:=\frac{1}{n}\sum_{j = 1}^np_{n-j} \lambda_0^j/p_n$. (b) describes the trend of the fidelity $f_n$ with respect to the number of copies of a $4$-dimensional input state for different noise parameters $\delta$.}
    \label{fig:f_P_delta_relationship}
\end{figure}

\begin{remark}(\textbf{General purification protocol})\label{rem:optimal_protocol}~A protocol for the general $n$ is further proposed as follows:
\begin{equation} 
\begin{aligned}
    \cE^\ast_{A^n\to A}:=\tr_{2\cdots n}\circ\cM_n,
\end{aligned}
\end{equation}
where $\tr_{2\cdots n}$ denotes the partial trace on the subsystems $A_2\cdots A_n$, $\cM_n(\cdot):=\Pi_n(\cdot)\Pi^\dagger_n$ and $\Pi_n$ is the projector on the symmetric subspace of $\cH^{\otimes n}_A$. Furthermore, for the depolarizing noise, the explicit forms of the maximal fidelity $f_n$ and successful probability $p_n$ are presented. See Lemma~\ref{lem:recursion_expression} for details. 
\end{remark}

It should be noted that this general protocol involves an operation of projection into the symmetric subspace of $n$ copies of noisy state~\cite{barenco1997stabilization}, while a discussion regarding its optimality has not been provided. Based on the observation in the case of $n=2$, we conjecture that the general probabilistic purification protocol $\cE^\ast_{A^n\to A}$ is optimal. We propose a strategy to prove its optimality by using primal and dual SDP in Eq.\eqref{sdp:primal_dual_fid}. Specifically, it can be simplified to a more straightforward problem in which we only need to check the positivity of the operator $f_nR_n^{T_{A^n}}-Q_n^{T_{A^n}}$. Inspired by the case of $n=2$ proved in Theorem~\ref{thm:opt_fidelity_probability} and Proposition~\ref{prop:opt_fidelity}, one can represent this operator as a linear combination of operators $\mathbf{P}_n(c)^{T_{n}}$, $c\in\cS_n$, and decompose it into a specific basis based on the irreducible representations of the symmetric group $\cS_n$. Instead of furnishing proof for general cases, the optimality of the protocol is established for the case of $n\leq 8$ and $d\leq 10$ through numerical experiments shown in Table~\ref{tab:comparison_n_d}.

\begin{table}[H]
\centering
\setlength{\tabcolsep}{0.8em}
\resizebox{0.9\textwidth}{!}{
\begin{threeparttable}
\begin{tabular}{l|cccccccccccc}
\toprule
\midrule
\addlinespace
& $ d = 2{~\cite{cirac1999optimal}}$ 
& $ d = 3{~\cite{fu2016quantum}}$
& $ d = 4$ 
& $ d = 5$
& $ d = 6$
& $ d = 7$
& $ d = 8$
& $ d = 9$
& $ d = 10$
& $\cdots$
& $d = \infty$
\\
\midrule
\addlinespace
$ n = 2$
& $\usym{2713}$ $\checked$
& $\usym{2713}$ $\checked$
& $\usym{2713}$ $\checked$
& $\usym{2713}$ $\checked$
& $\usym{2713}$ $\checked$
& $\usym{2713}$ $\checked$
& $\usym{2713}$ $\checked$
& $\usym{2713}$ $\checked$
& $\usym{2713}$ $\checked$
& $\usym{2713}$ 
& $\usym{2713}$ 
\\
\midrule
\addlinespace
$ n = 3$
& $\usym{2713}$ $\checked$
& $\usym{2713}$ $\checked$
& $\checked$
& $\checked$
& $\checked$
& $\checked$
& $\checked$
& $\checked$
& $\checked$
& 
& 
\\
\midrule
\addlinespace
$ n = 4 $
& $\usym{2713}$ $\checked$
& $\usym{2713}$ $\checked$
& $\checked$
& $\checked$
& $\checked$
& $\checked$
& $\checked$
& $\checked$
& 
& 
& 
\\
\midrule
\addlinespace
$ n = 5 $
& $\usym{2713}$ $\checked$
& $\usym{2713}$ $\checked$
& $\checked$
& $\checked$
& $\checked$
& 
& 
& 
& 
& 
& 
\\
\midrule
\addlinespace
$ n =6 $
& $\usym{2713}$ $\checked$
& $\usym{2713}$ $\checked$
& $\checked$
& $\checked$
& 
& 
& 
&  
& 
& 
& 
\\
\midrule
\addlinespace
$ n =7 $
& $\usym{2713}$ $\checked$
& $\usym{2713}$ $\checked$
& $\checked$
&  
& 
& 
& 
&  
& 
& 
& 
\\
\midrule
\addlinespace
$ n = 8$
& $\usym{2713}$ $\checked$
& $\usym{2713}$ $\checked$
& 
&  
& 
& 
& 
&  
& 
& 
& 
\\
\midrule
\addlinespace
$\vdots$
& $\usym{2713}$
& $\usym{2713}$
& 
&  
& 
& 
& 
&  
& 
& 
& 
\\
\midrule
\addlinespace
$\infty$
& $\usym{2713}$
& $\usym{2713}$

\\
\midrule
\bottomrule
\end{tabular}

\begin{tablenotes} 
\item  $\quad$ $\usym{2713}$: theory-proved
cases $\quad$ $\checked$: experiment-verified cases
\end{tablenotes} 
\end{threeparttable}}
\caption{Optimal state purification protocols proven by theory or verified through experiments for $n$ copies of an unknown $d$-dimensional state. 
The thick check marks and the thin check marks represent theory-proved cases and experiment-verified cases, respectively.
In $d = 2$ and $d = 3$ cases, our general protocol's 
results match the theoretical value proved by~\cite{cirac1999optimal} and~\cite{fu2016quantum}, respectively. Furthermore, we prove the optimality of our protocol for the case of $n=2$ and arbitrary dimension $d$ by using Theorem~\ref{thm:opt_fidelity_probability} and Proposition~\ref{prop:opt_fidelity}. In addition, due to the limitation of computer hardware, we demonstrate the optimality for $n\leq 8$ and $d\leq 10$ through numerical experiments.}
\label{tab:comparison_n_d}
\end{table}

It is worth noting that the authors made a similar observation in ~\cite{fiuravsek2004optimal}. However, we demonstrate it differently and propose more explicit forms encompassing the purification protocol, fidelity, and probability. In specific, from Fig.~\ref{fig:f_P_relationship}, one can observe that the probabilistic purification protocols involve a trade-off between fidelity and probability, and there is an inflection point where the universal fidelity is highest and the fidelity will decrease after this point, which we call the \textit{golden point} for probabilistic purification. Remarkably, for depolarizing noise, there exists an explicit protocol that achieves this golden point. Furthermore, one can calculate the maximal fidelity $f_n$ and the corresponding probability $p_n$ directly depending on their recursive forms instead of solving SDPs, which means one can analyze the trade-off between fidelity and probability for large values of $n$. In addition, Fig.~\ref{fig:fn_n_delta} depicts the trend that the maximal fidelity $f_n$ will converge to one as the number of input states $n$ increases for various error parameters. 

\subsection{Pauli noise and amplitude damping noise}\label{sec:Comparison experiments} 
In this section, our objective is to elucidate the feasibility of our SDP framework across various noise conditions. This analysis enables a comparative understanding of the performance disparities between protocols predicated on the SDP framework and those established protocols, including the symmetric projection strategy $\cE^\ast_{A^n\to A}$.

We focus on the amplitude damping noise~\cite{nielsen2001quantum} and Pauli noise~\cite{erhard2019characterizing} with different settings. Specifically, we depict lines and stars with respect to different noise parameter $\delta$ in each subplot in Fig.~\ref{fig:sdp_comparison}, where lines denote the protocols based on the optimal solution of SDP in Eq.~\eqref{sdp:primal_dual_fid} and stars symbolize the protocol based on the swap test procedure (namely, the protocol $\cE^\ast_{A^2\to A}$). The comparison results shown in Fig.~\ref{Fig.SDP_pauli_3copies} and Fig.~\ref{Fig.SDP_pauli} reveal that the swap test procedure approximates optimality in purifying Pauli noise, especially for the case of two copies of noisy state as the input. By contrast, if $n=4$, there exist better purification protocols. Remarkably, for the amplitude damping noise, there exists a significant gap between the swap test protocol and the optimal one obtained from the solution of SDP, which means a more powerful purification protocol is needed to bridge this gap. 

In conclusion, by taking advantage of the SDP framework, one can optimize a purification protocol for given various input parameters, such as the noise types, the desired successful probability, and the amount of noisy states that are consumed per execution, which allows us to find and design a proper purification protocol in specific settings.

\begin{figure}[H]
\centering  
\subfigure[]{
\includegraphics[width=0.31\textwidth]{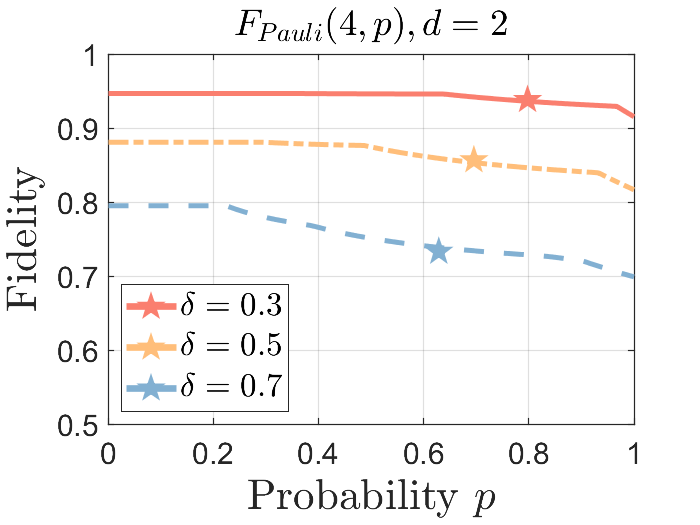}
\label{Fig.SDP_pauli_3copies}}
\subfigure[]{
\includegraphics[width=0.31\textwidth]{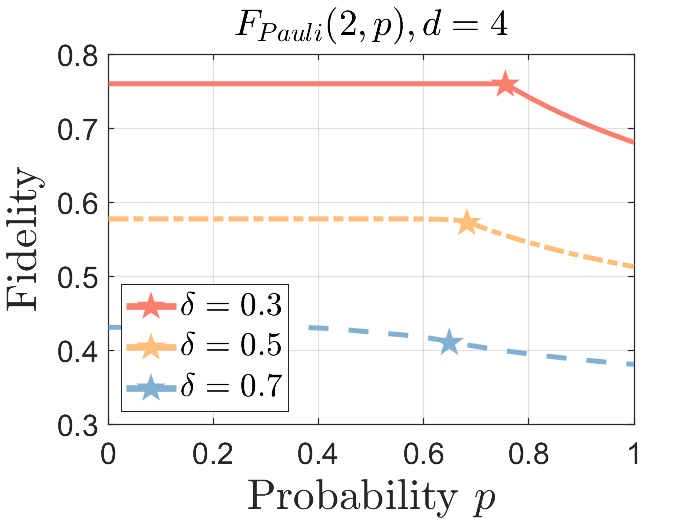}
\label{Fig.SDP_pauli}}
\subfigure[]{
\includegraphics[width=0.31\textwidth]{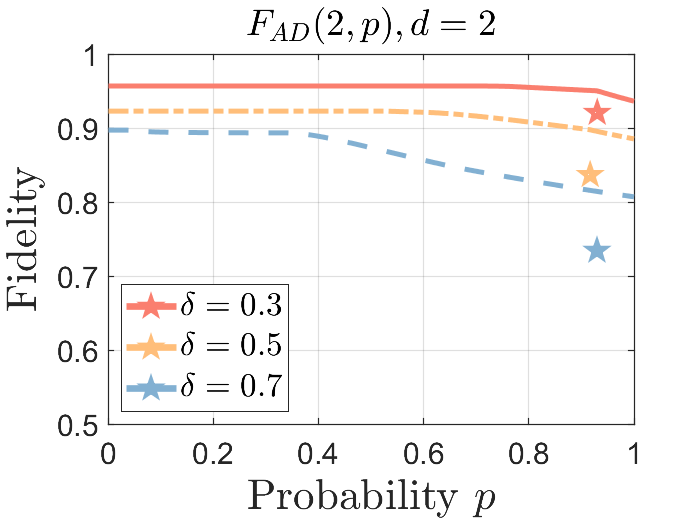}
\label{Fig.SDP_AD}}
\caption{Comparison of purification protocols obtained by SDP and the protocol based on the swap test for different noise channels. The lines describe the relationship between the success probability $p$ and the corresponding fidelity obtained by SDP in Eq.~\eqref{sdp:primal_dual_fid}. The star markers describe the results obtained by the protocol based on the swap test. (a) considers $4$-copy qubits affected by Pauli noise, with Kraus operators $K_0 := \sqrt{1-0.75\delta}I$, $K_1 := \sqrt{0.1\delta}X$, $K_2 := \sqrt{0.2\delta}Y$ and $K_3 := \sqrt{0.45\delta}Z$.
(b) addresses $2$-copies qudits ($d=4$) under the same Pauli noise, where the Kraus operators are $K_i := K_j \otimes K_k$ with $K_j, K_k \in \{\sqrt{1-0.75\delta}I, \sqrt{0.1\delta}X, \sqrt{0.2\delta}Y, \sqrt{0.45\delta}Z\}$. (c) describes $2$-copies qubits exposed to amplitude damping noise, with Kraus operators $K_0=\ketbra{0}{0}+\sqrt{1-\delta}\ketbra{1}{1}$ and $K_1=\sqrt{\delta}\ketbra{0}{1}$.
}
\label{fig:sdp_comparison}
\end{figure}

\section{Practical circuits based on block encoding and PQCs}\label{sec:implementation}
We further delve into the circuit realization of the probabilistic purification protocol $\cE^\ast_{A^n\to A}$ referenced in Remark~\ref{rem:optimal_protocol}, which focuses on projecting the noisy state $\rho^{\otimes n}$ into its symmetric subspace. Recognizing that each permutation element $c$ in the symmetric group $\mathcal{S}_n$ can be expressed as a product of two-cycle permutations, various studies~\cite{alicki1988symmetry,barenco1997stabilization} have introduced efficient quantum circuits for executing this projection operation. 
This is achieved by recursively managing the permutation operator $\mathbf{P}_n(c)$ and preparing an amplitude superposition of ancilla qubits. Nonetheless, the requirement for ancilla systems still scales polynomially with the number of copies of a noise state $n$. 

In practical terms, minimizing the ancilla usage becomes a critical consideration. To address this, we explore the integration of block encoding~\cite{low2019hamiltonian,gilyen2019quantum,camps2024explicit} and parameterized quantum circuits~\cite{benedetti2019parameterized} as innovative strategies to refine the understanding and implementation of the symmetric projection circuit. By leveraging these techniques, we aim to streamline the circuit design from a practical standpoint, potentially reducing the reliance on ancillary systems while maintaining operational efficiency and effectiveness.

Notice that the projector $\Pi_n$ is a linear combination of unitaries, i.e., $\Pi_n=\sum_{c\in\cS_n}\frac{1}{n!}\mathbf{P}_{n}(c)$, where $\mathbf{P}_{n}(c)$ is the permutation operator of the symmetric group $\cS_n$ of degree $n$. 
Thus, one can first block encode this projector $\Pi_n$ by the \textit{linear combinations of unitaries} (LCU) algorithm~\cite{childs2012hamiltonian,loaiza2023reducing}, i.e., constructing a quantum circuit $U$ for the state purification task, such that
\begin{equation}\label{eq:Unitary_bolck_encoding}
    \begin{aligned}
        U=\left[\begin{array}{cc}
            \Pi_n &  \cdot\\
            \cdot & \cdot
        \end{array}\right]
    \end{aligned}.
\end{equation} 
It consists of the preparation circuit and selection circuit. The former can produce a superposition state determined by coefficients of unitaries, while the latter function is to select the corresponding unitary based on ancilla systems. 
The following result demonstrates the implementation of the probabilistic purification protocol $\cE^\ast_{A^n\to A}$ within a quantum circuit.
\begin{proposition}[Quantum circuit of purification protocol]
\label{prop:circuit of protocol} Let $\cE^\ast_{A^n\to  A}$ be the probabilistic purification protocol shown in Remark~\ref{rem:optimal_protocol}. Then, there exists a quantum circuit $U_{n\to1}:=(V^\dagger\otimes I_n) W (V\otimes I_n)$ such that
\begin{equation}\label{eq.cE}
    \begin{aligned}
        \cE^\ast_{A^n\to A}(\cdot)=\tr_{2\cdots n}\left[ \tr_{A^\prime}[(\ketbra{0}{0}_{A^\prime}\otimes I_n)U_{n\to1}(\ketbra{0}{0}_{A^\prime}\otimes \cdot)U^\dagger_{n\to1}(\ketbra{0}{0}_{A^\prime}\otimes I_n)]\right],
    \end{aligned}
\end{equation}
where $A^{\prime}$ is the the auxiliary
system, $I_n$ is the identity matrix of the subsystems $A_1\cdots A_n$, the preparation circuit $V$ satisfies $V\ket{0}_{A^\prime}=\sum_{k=0}^{n!-1}\frac{1}{\sqrt{n!}}\ket{k}$, and the selection circuit $W$ consists of $\lceil\log_d(n!-1)\rceil$ different controlled-$\mathbf{P}_n(c)$ gates, where $\mathbf{P}_n(c)$ is permutation operator and $\cS_n$ denotes the symmetric group of degree $n$.
\end{proposition}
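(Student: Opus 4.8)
The plan is to recognize Eq.~\eqref{eq.cE} as the post-selected Stinespring dilation of the LCU block encoding of $\Pi_n$, so the argument splits into verifying a single block-encoding identity and then unwinding the resulting Kraus form. First I would make the select circuit explicit: enumerating $\cS_n=\{c_0,\dots,c_{n!-1}\}$, define
\begin{equation}
    W:=\sum_{k=0}^{n!-1}\ketbra{k}{k}_{A'}\ox\mathbf{P}_n(c_k),
\end{equation}
extended by the identity on any unused ancilla labels; this is unitary since each $\mathbf{P}_n(c_k)$ is. Together with the stated preparation circuit $V$, this fixes $U_{n\to1}=(V^\dagger\ox I_n)W(V\ox I_n)$.

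The first key step is the block-encoding identity $\bra{0}_{A'}U_{n\to1}\ket{0}_{A'}=\Pi_n$ of Eq.~\eqref{eq:Unitary_bolck_encoding}. Using $V\ket{0}_{A'}=\frac{1}{\sqrt{n!}}\sum_{j}\ket{j}$, and hence $\bra{0}_{A'}V^\dagger=\frac{1}{\sqrt{n!}}\sum_{j}\bra{j}$, a direct sandwich gives
\begin{equation}
    \bra{0}_{A'}U_{n\to1}\ket{0}_{A'}=\frac{1}{n!}\sum_{j,k}\braket{j}{k}\,\mathbf{P}_n(c_k)=\frac{1}{n!}\sum_{k}\mathbf{P}_n(c_k)=\Pi_n,
\end{equation}
the last equality being the definition of the symmetric projector recalled in the preliminaries. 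Next I would read off the channel. Writing $K_m:=\bra{m}_{A'}U_{n\to1}\ket{0}_{A'}$ for the Kraus operators of the dilation, the computation above shows $K_0=\Pi_n$. Because the two flanking projectors $\ketbra{0}{0}_{A'}$ in Eq.~\eqref{eq.cE} retain only the $m=m'=0$ term after $\ketbra{0}{0}_{A'}\ox\rho$ is conjugated by $U_{n\to1}$, the partial trace over $A'$ collapses to $K_0\,\rho\,K_0^\dagger=\Pi_n\rho\Pi_n^\dagger=\cM_n(\rho)$. Applying the remaining $\tr_{2\cdots n}$ yields $\tr_{2\cdots n}\circ\cM_n(\rho)=\cE^\ast_{A^n\to A}(\rho)$ for every input $\rho$, which is exactly the protocol of Remark~\ref{rem:optimal_protocol}, establishing Eq.~\eqref{eq.cE}.

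The remaining claim — that $W$ needs only $\lceil\log_d(n!-1)\rceil$ controlled-$\mathbf{P}_n(c)$ gates rather than the naive one-per-permutation SELECT — is where the real work sits, and I expect it to be the main obstacle. The idea is to encode the label $k$ in base $d$ across $\lceil\log_d(n!-1)\rceil$ ancilla qudits and to factor each $\mathbf{P}_n(c_k)$ through the digits of $k$ (for instance via a factoradic/Lehmer-code decomposition of permutations into a structured product of transpositions), so that $W$ becomes a product of digit-controlled permutation gates, one multiplexed gate per ancilla qudit. The delicate point is that $\cS_n$ is non-abelian, so the digit-wise factorization and the enumeration $k\mapsto c_k$ must be chosen compatibly for the controls to compose back to $\mathbf{P}_n(c_k)$; checking this bookkeeping and that it meets the stated count is the technical crux, whereas the block-encoding and channel identities above are routine once $W$ is in hand.
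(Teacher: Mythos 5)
Your construction of $W$ as the standard SELECT operator, the verification that $\bra{0}_{A'}U_{n\to1}\ket{0}_{A'}=\frac{1}{n!}\sum_{k}\mathbf{P}_n(c_k)=\Pi_n$, and the collapse of Eq.~\eqref{eq.cE} to $\tr_{2\cdots n}[\Pi_n(\cdot)\Pi_n^\dagger]$ via the flanking $\ketbra{0}{0}_{A'}$ projectors is exactly the LCU (prepare--select--unprepare) route the paper intends; the paper gives no explicit proof of this proposition, but your argument is the one its surrounding discussion describes. On the one point you flag as the ``technical crux,'' note that the gate-count clause is not something you should labor to prove as literally stated: the paper's own worked example (Fig.~\ref{fig:protocol_3_1_v2}, $n=3$, $d=2$) uses five controlled-$\mathbf{P}_3(c)$ gates, i.e.\ $n!-1$ of them, not $\lceil\log_d(n!-1)\rceil=3$, so the quantity $\lceil\log_d(n!-1)\rceil$ evidently refers to the size of the ancilla register (number of qudits needed to hold the label $k$) rather than to the number of controlled permutations, and your naive one-gate-per-permutation SELECT already realizes the construction the paper has in mind. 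The genuinely compressed circuits with few controlled permutations are what the paper develops separately, and only for $n=3,4$, in Propositions~\ref{prop:3qubit_c} and the $\Pi_4$ discussion.
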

Proposition~\ref{prop:circuit of protocol} implies that for a given $d$-dimensional depolarized state $\rho^{\otimes n}$, one can reconstruct a state $\sigma$ with high purity by employing a specific quantum circuit. For the sake of simplicity, we illustrate the circuit in this case when considering $n=3$ and $d=2$. The specific circuit is shown in Fig.~\ref{fig:protocol_3_1_v2}. In addition, one can find that the purification circuit can be reduced to the \textit{swap test gadget} if $n=2$.

\begin{figure}[H]
    \centering
    \includegraphics[width=0.72\textwidth]{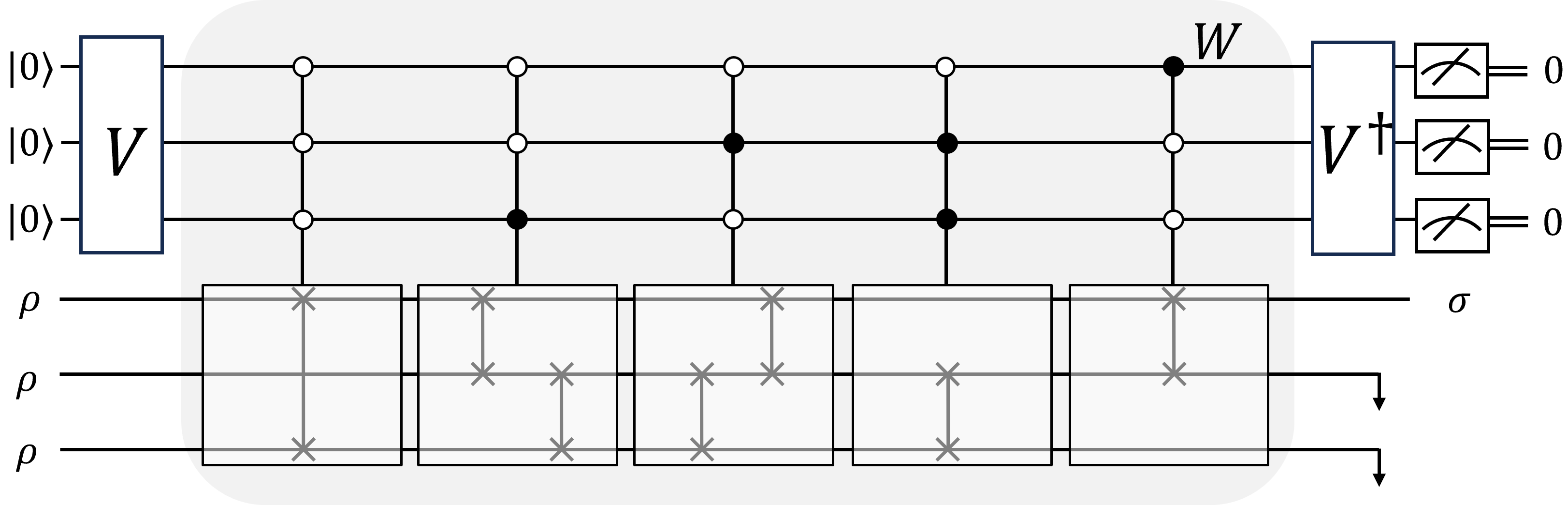}
    \caption{The purification circuit for three input noisy qubit states. The preparation circuit $V$ satisfies $V\ket{0}=\sum_{j=0}^{5}\frac{1}{\sqrt{6}}\ket{j}$, and the selection circuit $W$ consists of five different controlled-$\mathbf{P}_3(c)$ gates, where $c\in\{(12),(13),(23),(123),(132)\}$.}
    \label{fig:protocol_3_1_v2}
\end{figure}

The circuit outlined in Proposition~\ref{prop:circuit of protocol}, as well as the protocol proposed in prior works~\cite{barenco1997stabilization,laborde2024quantum}, indeed involve the use of linearly many ancilla systems, which could limit their practical applications due to resource constraints inherent in current quantum computing technologies. A promising direction to circumvent this limitation involves a strategic reduction in the reliance on ancillary systems. A novel approach in this context is the use of rotation gates, specifically $R_y$ gates, to replace the selection circuit applied to the ancilla system. 
When we use fewer ancilla qubits and $R_y$ gates on them to simulate the selection circuit, it will introduce products of permutation operators in the overall isometry. For example, see the proof of Proposition~\ref{prop:3qubit_c} below. 
Therefore, we need to carefully select the unitaries used in linear combinations, rather than employing all $n!$ permutation operators directly.
To be specific, we present explicit purification circuits for the cases of $n=3$ and $n=4$ in the following discussion.

\begin{proposition}[Simplified purification circuit] 
\label{prop:3qubit_c}
Considering the purification protocol $\cE^\ast_{A^3\to  A}$ with three copies of noise qubit state $\rho\in\cD(\cH_{A})$, there exists a simplified quantum circuit $U_{3\to1}$ to realize it as follows:
    \begin{equation}
        \begin{aligned}
        U_{3\to1}:= \left(R_y(\gamma) \otimes I_3\right)\, C\mbox{-}\mathbf{P}_3((132))\, \left(R_y(\beta) \otimes I_3\right) \, C\mbox{-}\mathbf{P}_3((123))\,\left(R_y(\alpha) \otimes I_3\right)
        \end{aligned}
    \end{equation}
    where $\alpha=\gamma=-\arctan(\sqrt{2})$ and $\beta=\arccos(-1/3)$, $C\mbox{-}\mathbf{P}_3((132))$ and $C\mbox{-}\mathbf{P}_3((123))$ denote the controlled-$\mathbf{P}_3((132))$ and controlled-$\mathbf{P}_3((123))$, respectively.
\end{proposition}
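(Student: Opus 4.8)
The plan is to reduce the circuit identity to an operator computation on the main register and then verify that the resulting operator induces the same purification output as $\Pi_3$ on i.i.d.\ inputs. By \eqref{eq.cE}, the channel realized by $U_{3\to1}$ is $\cE(\cdot)=\tr_{23}[M(\cdot)M^\dagger]$ with effective (sub-normalized) Kraus operator $M:=(\bra{0}_{A'}\otimes I_3)\,U_{3\to1}\,(\ket{0}_{A'}\otimes I_3)$, so it suffices to identify $M$ and compare it with the map $\cM_3(\cdot)=\Pi_3(\cdot)\Pi_3^\dagger$ of Remark~\ref{rem:optimal_protocol}.

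First I would compute $M$ by expanding $U_{3\to1}$ over the computational paths of the single ancilla qubit. Writing each controlled gate as $C\mbox{-}\mathbf{P}_3(c)=\ketbra{0}{0}\otimes I+\ketbra{1}{1}\otimes\mathbf{P}_3(c)$ and inserting the matrix elements of $R_y$, the amplitude from $\ket0$ back to $\ket0$ splits into four paths indexed by the two intermediate ancilla values. The doubly-controlled path contributes $\mathbf{P}_3((132))\mathbf{P}_3((123))=\mathbf{P}_3(e)=I$, so $M$ lands in $\operatorname{span}\{I,\mathbf{P}_3((123)),\mathbf{P}_3((132))\}$. Substituting the half-angle values of $\beta=\arccos(-1/3)$ (so that $\cos(\beta/2)=1/\sqrt3$, $\sin(\beta/2)=\sqrt{2}/\sqrt3$) together with $\alpha=\gamma=-\arctan\sqrt2$ (so that $\cos\alpha=1/\sqrt3$, $\sin\alpha=-\sqrt2/\sqrt3$), the three surviving coefficients all collapse to $1/3$, giving $M=\tfrac13\big(I+\mathbf{P}_3((123))+\mathbf{P}_3((132))\big)$. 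This is precisely the projector onto the $\langle(123)\rangle$-invariant subspace, i.e.\ the symmetrizer over the cyclic subgroup rather than over all of $\cS_3$.

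The crux — and the reason a single ancilla qubit suffices — is that although $M\neq\Pi_3$ (the transpositions $\mathbf{P}_3((12)),\mathbf{P}_3((13)),\mathbf{P}_3((23))$ are unreachable as products of two $3$-cycles), the two operators induce the same output on the only inputs the protocol ever sees, namely $\rho^{\otimes3}$. I would show $\tr_{23}[M\rho^{\otimes3}M^\dagger]=\tr_{23}[\Pi_3\rho^{\otimes3}\Pi_3]$ for every qubit state $\rho$. Since $\rho^{\otimes3}$ commutes with every $\mathbf{P}_3(c)$ and both $M$ and $\Pi_3$ are projectors, each side reduces to $\tr_{23}[\rho^{\otimes3}M]$ and $\tr_{23}[\rho^{\otimes3}\Pi_3]$ respectively, so the claim is equivalent to
\begin{equation}
\tr_{23}\!\big[\rho^{\otimes3}\big(I+\mathbf{P}_3((123))+\mathbf{P}_3((132))-\mathbf{P}_3((12))-\mathbf{P}_3((13))-\mathbf{P}_3((23))\big)\big]=0.
\end{equation}
Evaluating the partial traces by the swap trick gives $\rho,\rho^3,\rho^3,\rho^2,\rho^2,\tr[\rho^2]\rho$ for the six terms, so the left-hand side is $\rho+2\rho^3-2\rho^2-\tr[\rho^2]\rho$. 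For a qubit this vanishes identically: Cayley--Hamilton gives $\rho^2=\rho-(\det\rho)I$ and $\tr[\rho^2]=1-2\det\rho$, and substituting reduces the expression to $0$.

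I expect the main obstacle to be conceptual rather than computational: recognizing that the simplified circuit implements the cyclic symmetrizer, not the full symmetric projector, and then proving that this weaker operator is channel-equivalent to $\cM_3$ on i.i.d.\ inputs. The equivalence rests entirely on the $d=2$ identity $\rho+2\rho^3=2\rho^2+\tr[\rho^2]\rho$, which is special to qubits (the analogous cancellation fails for $d\geq3$, where $\tr[\rho^2]$ no longer coincides with $(1-\lambda)^2$); pinning down exactly where qubit-specific structure enters is the part that needs care. A secondary bookkeeping step is the determination of $\alpha,\beta,\gamma$: rather than guessing, one solves the three ``coefficient $=1/3$'' equations for the half-angles, which is what fixes $\alpha=\gamma=-\arctan\sqrt2$ and $\beta=\arccos(-1/3)$.
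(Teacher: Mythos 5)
Your computation of the block-encoded operator $M=(\bra{0}\otimes I_3)U_{3\to1}(\ket{0}\otimes I_3)=\tfrac13\bigl(I+\mathbf{P}_3((123))+\mathbf{P}_3((132))\bigr)$ agrees with the paper, and your determination of the angles is correct. However, the statement you single out as the crux --- that $M\neq\Pi_3$, so that the circuit only matches the symmetric projection at the level of the induced channel on i.i.d.\ inputs --- is false for qubits. The permutation operators on $(\mathbb{C}^2)^{\otimes 3}$ are not linearly independent: since $\wedge^3\mathbb{C}^2=0$, the sign representation of $\cS_3$ does not occur in $(\mathbb{C}^2)^{\otimes 3}$, and the class-sum identity $\mathbf{P}_3((12))+\mathbf{P}_3((13))+\mathbf{P}_3((23))=\mathbf{P}_3((1))+\mathbf{P}_3((123))+\mathbf{P}_3((132))$ holds as an \emph{operator} identity (verify it on the trivial and standard irreps, the only ones present). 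Hence $\Pi_3=\tfrac16\bigl(I+\sum_{\text{transp.}}\mathbf{P}_3+\sum_{3\text{-cyc.}}\mathbf{P}_3\bigr)=\tfrac13\bigl(I+\mathbf{P}_3((123))+\mathbf{P}_3((132))\bigr)=M$ exactly, which is precisely how the paper concludes: the circuit block-encodes $\Pi_3$ itself and therefore realizes $\cE^\ast_{A^3\to A}$ on all inputs, not merely on $\rho^{\otimes3}$. Your scalar identity $\rho+2\rho^3-2\rho^2-\tr[\rho^2]\rho=0$ is correct, but it is just the partial-trace shadow of this operator identity.

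That said, your argument is not invalid as a proof: agreement of the induced channels on i.i.d.\ inputs is all the purification task requires, and you establish it soundly (the reduction $\tr_{23}[M\rho^{\otimes3}M^\dagger]=\tr_{23}[\rho^{\otimes3}M]$ via $[M,\rho^{\otimes3}]=0$ and $M^2=M$, followed by the Cayley--Hamilton computation, is correct). You thus prove a formally weaker statement that still suffices, at the cost of an extra computation and a misleading account of why a single ancilla qubit is enough. The correct account --- and the paper's one-line argument --- is representation-theoretic: for $d=2$ the cyclic symmetrizer coincides with the full symmetrizer as an operator, which is indeed the qubit-specific structure you were trying to locate (and which, as you note, fails for $d\geq3$).
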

\begin{proof}
Observing that for any quantum state $\rho\in\cD(\cH_{A})$, we have $\cE^\ast_{A^3\to  A}(\rho^{\otimes 3})=\tr_2[\Pi_3\rho^{\otimes 3}\Pi^\dagger_3]$, which means we only need to construct a quantum circuit which is block-encoded by the projector $\Pi_3$. Specifically, we will show $(\bra{0}\otimes I_3)U_{3\to1}(\ket{0}\otimes I_3)=\Pi_3$. Notice that
\begin{equation}\label{eq:u3-1-derivation}
\begin{aligned}
    (\bra{0}\otimes I_3)U_{3\to1}(\ket{0}\otimes I_3)&=\bra{0}R_y(\gamma)\ket{0}\bra{0}R_y(\beta)\ket{0}\bra{0}R_y(\alpha)\ket{0}I_3\\
    &\quad+\bra{0}R_y(\gamma)\ket{0}\bra{0}R_y(\beta)\ket{1}\bra{1}R_y(\alpha)\ket{0}\mathbf{P}_{3}((123))\\
    &\quad+\bra{0}R_y(\gamma)\ket{1}\bra{1}R_y(\beta)\ket{0}\bra{0}R_y(\alpha)\ket{0}\mathbf{P}_{3}((132))\\
    &\quad+\bra{0}R_y(\gamma)\ket{1}\bra{1}R_y(\beta)\ket{1}\bra{1}R_y(\alpha)\ket{0}\mathbf{P}_{3}((132))\mathbf{P}_{3}((123)),
\end{aligned}
\end{equation}
and permutation operators matrix of the symmetry group $\cS_3$ satisfy $\mathbf{P}_{3}((12))+\mathbf{P}_{3}((13))+\mathbf{P}_3((23))=\mathbf{P}_{3}((1))+\mathbf{P}_3((123))+\mathbf{P}_3((132))$ for the qubit system, which implies that the projector on the symmetric subspace $\cH^{\otimes 3}_A$ can be rewritten as $\Pi_3=\frac{1}{3}\left[\mathbf{P}_3((1))+\mathbf{P}_3((123))+\mathbf{P}_3((132))\right]$, and $\mathbf{P}_3((123))\mathbf{P}_3((132))=\mathbf{P}_3((1))$. It is straightforward to check that one can set $\alpha=\gamma=-\arctan(\sqrt{2})$ and $\beta=\arccos(-1/3)$ such that Eq.~\eqref{eq:u3-1-derivation} equals to $\Pi_3$, which completes this proof. The specific circuit is shown in Fig.~\ref{fig:protocol_3_1}.
\end{proof}

\begin{figure}[H]
    \centering
    \includegraphics[width=0.55\textwidth]{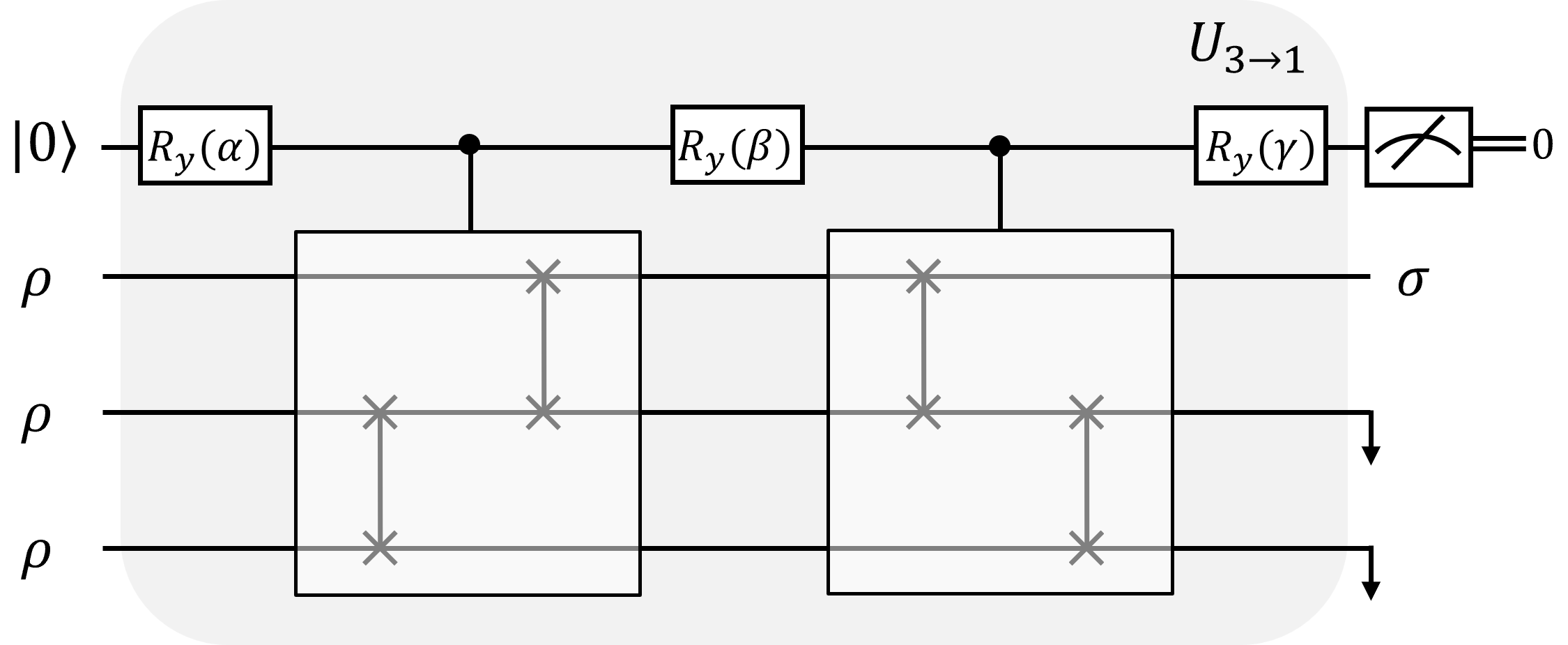}
    \caption{The simplified purification circuit for given three input noisy qubit states. We set the rotation parameters $\alpha=\gamma=-\arctan(\sqrt{2})$ and $\beta=\arccos(-1/3)$, and apply measurement on  the auxiliary system. Then, we obtain the purified state $\sigma$ if the measurement outcome is zero.}
    \label{fig:protocol_3_1}
\end{figure}

The purification circuit illustrated in Fig.~\ref{fig:protocol_3_1} demonstrates a practical approach to purify noise states using a minimalistic design. 
The circuit employs just three $R_y$ rotation gates acting on a single ancilla qubit, showcasing an efficient use of quantum resources. The key to this efficiency lies in the selective use of permutation operators, specifically $\mathbf{P}_3((123))$ and $\mathbf{P}_3((132))$. This selection is informed by the property that the symmetric projector $\Pi_3$ for three qubits can be expressed as a linear combination: $\Pi_3=\frac{1}{3}\left[\mathbf{P}_3((1))+\mathbf{P}_3((123))+\mathbf{P}_3((132))\right]$, where $\mathbf{P}_3((123))\mathbf{P}_3((132))=\mathbf{P}_3((1))$. Furthermore, it is noteworthy that, utilizing this protocol, one can quantify the purification performance with respect to the noisy parameter $\delta$. In specific, with any input noisy state $\rho^{\otimes 3}$, this circuit allows us to produce an output state $\sigma$ that exhibits high fidelity with the ideal pure state $\ket{\psi}$. The output state $\sigma$ can be rewritten as the following form: 

\begin{equation}
\begin{aligned} 
\sigma=\frac{\frac{1}{3}\rho+\frac{2}{3}\rho^3}{\frac{1}{3}+\frac{2}{3}\tr[\rho^3]}=(1-\delta^\prime)\ketbra{\psi}{\psi}+\delta^\prime\frac{I}{2},\quad \delta^\prime:=\frac{2\delta+\delta^3}{6p(\delta)}, 
\end{aligned} 
\end{equation}
where $p(\delta):=1-\delta + \frac{1}{2}\delta^2$ represents the probability of obtaining the measurement outcome zero, i.e., the purification probability. This purification circuit can be treated as a gadget and can be utilized to construct a recursive purification protocol. Further analysis of stream purification can be done via the technique in Ref.~\cite{childs2023streaming}. 

Similarly, when considering four copies of a noisy state are involved, a practical quantum circuit is proposed and shown in Fig.~\ref{fig:U_4_1}. In this case, the selection of rotation angles and specific permutation operators are relatively difficult. Thus, we treat this circuit as a PQC, where the rotation angle $\alpha_j$, for $j=0,\cdots,5$ are trainable parameters and the objective is to train a circuit $U_{4\to 1}$ such that it can approximate the projector $\Pi_4$, i.e., $(\bra{0}\otimes I_4)U_{4\to1}(\ket{0}\otimes I_4)=\Pi_4$. Obviously, the choice of purification operators is a non-trivial issue. We observe that two equivalent classes, denoted by $[4]:=\{(1234),(1243),(1324),(1342),(1423),(1432)\}$ and $[2,2]:=\{(12)(34),(13)(24),(14)(23)\}$, are sufficient to represent the projector $\Pi_4$ in terms of the following property:
\begin{equation}
    \begin{aligned}
        \Pi_4 = \frac{3\mathbf{P}_{4}((1)) + 2\mathbf{P}_{4}([4]) - \mathbf{P}_{4}([2,2])}{12},
    \end{aligned}
\end{equation}
where $\mathbf{P}_{4}([4])$ and $\mathbf{P}_{4}([2,2])$ denote the sum of permutation operator over equivalent classes $[4]$ and $[2,2]$, respectively. Thus, by selecting rotation angles using analytical or PQC strategies, and carefully designing the permutation operators, one might present a quantum circuit for symmetric projection in general cases, although there exists a trade-off between the usage of ancilla qubits and the complexity of choosing permutation operators.

\begin{figure}[H]
    \centering
    \includegraphics[width=1\linewidth]{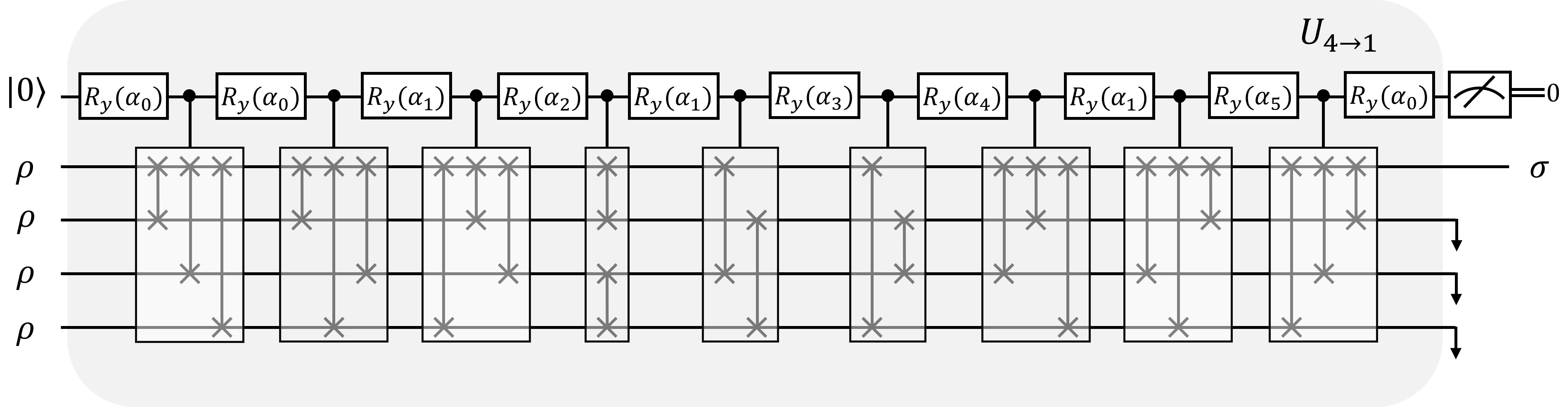}
    \caption{The simplified purification circuit for given four input noisy qubit states. We set the rotation 
approximately parameters $\alpha_0= 1.8447 + \pi$, $\alpha_1= 3.6535$, $\alpha_2= 4.3632$, $\alpha_3= \pi$, $\alpha_4= 2.4319$ and $\alpha_5= 1.8447$. Notice that the analytical solutions for the parameters $\{\alpha_j\}_{j=0}^5$ are not given. Therefore, the specific numerical value may vary slightly across different computers.}
    \label{fig:U_4_1}
\end{figure}

\section{Sample complexity estimation and recursive extension}\label{sec:experiments}
In this section, we evaluate the performance of the probabilistic purification protocols that is the symmetric projection operation. First, we present an estimation algorithm for the sample complexity of the optimal protocol $\cE^\ast_{A^n\to A}$. Then, we propose a general recursive purification protocol and numerically demonstrate the relationship between maximal fidelity and recursive depth. 


As the probabilistic purification protocol $\cE^\ast_{A^n\to A}$, the number of depolarized states consumed in this protocol is of interest, also known as the \textit{sample complexity}. To estimate the sample complexity, we introduce a classical algorithm based on the recursive forms of maximal fidelity $f_n$ and probability $p_n$ proposed in Lemma~\ref{lem:recursion_expression}. The algorithm is designed to handle an input error parameter $\delta$, system dimension $d$, the desired output fidelity $f_{goal}$, and return the number of input states $n$ during this protocol and the expected number of noisy states $\frac{n}{p_n}$. By utilizing Alg.~\ref{Alg:complexity_estimation}, one can determine the expected numbers of noisy states consumed in the probabilistic protocol $\cE^\ast_{A^n\to A}$ for any target fidelity $f_{goal}\in(0,1)$ and system dimension $d\geq 2$. The scaling of the sample complexity is presented in Table~\ref{tab:sample_complexity}.

\begin{algorithm}[H]\label{Alg:complexity_estimation}
  \SetAlgoLined
  \SetKwData{Left}{left}\SetKwData{This}{this}\SetKwData{Up}{up}
  \SetKwFunction{Union}{Union}\SetKwFunction{FindCompress}{FindCompress}
  \SetKwInOut{Input}{input}\SetKwInOut{Output}{output}
  
  \Input{Goal fidelity $f_{goal}$, error parameter $\delta$ for depolarizing noise and dimension $d$ of Hilbert spaces $\cH_A$}
  
  \Output{The number of input noisy states and the expected number of noisy states consumed during this protocol}
  $n \leftarrow 0$, $p_0 \leftarrow 1$
  and $\Lambda := \operatorname{Diag}(\lambda_0\,\cdots,\lambda_{d-1})$   \\$\lambda_0 \leftarrow 1-\frac{d-1}{d}\delta$, $\lambda_{j \neq 0} \leftarrow \frac{\delta}{d}$\\
  \Repeat{$f_n \geq f_{goal}$}
  {Number of input noisy states $n \leftarrow n+1$ \\ $n_{th}$ success probability $p_n \leftarrow \frac{1}{n}\sum_{j=1}^n p_{n-j} \tr\left[ \Lambda^j\right]$ \\ $n_{th}$  output fidelity $f_n \leftarrow \frac{1}{n}\sum_{j = 1}^np_{n-j} \lambda_0^j/p_n$}
  \KwRet $n$, $\frac{n}{p_n}$
  \caption{Estimation of Sample Complexity}
  \label{alg:pqc-based}
\end{algorithm}
\begin{table}[H]
\centering
\setlength{\tabcolsep}{1em}
\resizebox{0.95\textwidth}{!}{
\begin{tabular}{l|cccccccc}
\toprule
\midrule
\addlinespace
$f_{goal}$ 
& $0.9285$ 
& $0.9682$
& $0.9801$
& $0.9842$
& $0.9880$
& $0.9894$
& $0.9900$
\\
\midrule
\addlinespace
${n}$
& $3$
& $8$ 
& $14$ 
& $18$ 
& $23$ 
& $26$ 
& $28$
\\
\midrule
\addlinespace
$\frac{n}{p_n}$
& $8$
& $5.2 \times 10^1$ 
& $3.27 \times 10^2$
& $1.01 \times 10^3$
& $3.89  \times 10^3$
& $8.55 \times 10^3$
& $1.43 \times 10^4$
\\
\midrule
\bottomrule
\end{tabular}
}
\caption{Estimation of sample complexity. $f_{goal}$ denotes the target fidelity, $n$ is the number of the input noisy state, and $\frac{n}{p_n}$ is the expected numbers of depolarized states $\rho$ with error parameter $\delta=0.3$ and dimension $d=3$.}
\label{tab:sample_complexity}
\end{table}

It is natural to consider the protocol $\cE^\ast_{A^n\to A}$ as foundational element to construct a recursive purification procedure, which aims at enhancing fidelity as the number of iterations increases. Specifically, we propose a general recursive purification method as follows:

\begin{algorithm}[H]
  \SetAlgoLined
  \SetKwData{Left}{left}\SetKwData{This}{this}\SetKwData{Up}{up}
  \SetKwFunction{Union}{Union}\SetKwFunction{FindCompress}{FindCompress}
  \SetKwInOut{Input}{input}\SetKwInOut{Output}{output}
  
  \Input{The number of the noise qudit $\rho$ consumed per recursion $n$, and the recursive depth $m$}
  \Output{Purified state $\sigma_{n,m}$}
  
  \eIf{m=0}{
   return $\sigma_{n, 0} \leftarrow \rho$\\
  }{
  $\sigma_{n,m-1} \leftarrow $GR-PURIFY$(n, m-1)$\\
  \Repeat{a = 0}
  {Apply the purification circuit mentioned in Proposition~\ref{prop:circuit of protocol} to $\sigma_{n,m-1}$ and denote the measurement outcome by $a$.}
  \KwRet Purified state $\sigma_{n,m}$}
  \caption{General Recursive Purification, named GR-PURIFY$(n, m)$}
  \label{alg:recursive_pqc-based}
\end{algorithm}

It is important to note that, in~\cite{childs2023streaming}, the authors introduce a recursive purification protocol based on the swap test procedure, which corresponds to the case of $n=2$ in our protocol. However, maintaining long coherence for purified states can pose a significant challenge for quantum memory as the depth of the recursive protocol increases. Compared to the recursive purification procedure based on the swap test, It is evident from Fig.~\ref{fig:recursive_depth_fidelity} that the recursive purification protocol proposed in Alg.~\ref{alg:recursive_pqc-based}  requires a lower recursive depth to achieve the same target fidelity.

\begin{figure}[H]
\centering 
\includegraphics[width=0.53\textwidth]{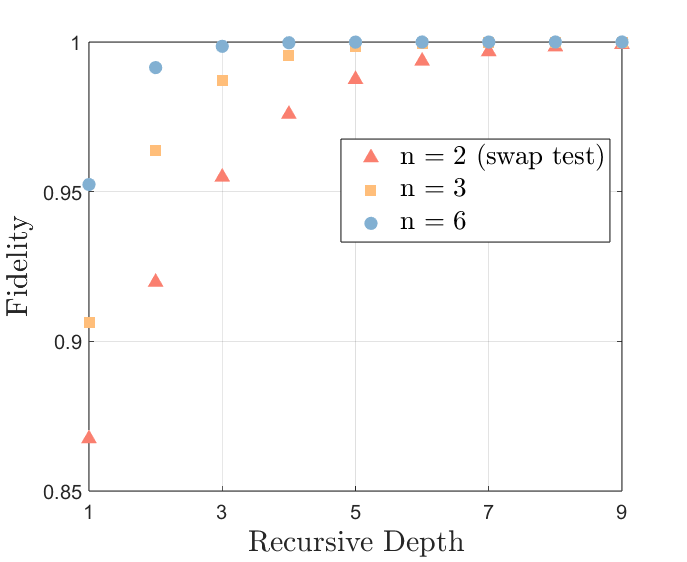}
\caption{Trend of fidelity between output state $\sigma_{n,m}$ presented in Alg.~\ref{alg:recursive_pqc-based} and the corresponding ideal state as recursive depth increases. The error parameter of depolarizing noise is $\delta = 0.3$ and the dimension of noisy states is $d=3$.}
\label{fig:recursive_depth_fidelity}
\end{figure}

\section{Conclusions and Discussions} 
In this paper, we focus on the probabilistic purification task. Firstly, we have formulated a semidefinite programming framework to compute the optimal probabilistic purification protocols for different settings.
This framework gives us the flexibility to explore the optimal protocols in practical scenarios where the parameters: the fidelity, the success probability, copies of noisy states, and noise models, may have different priorities.

Specifically, for the depolarizing noise, we have demonstrated that a general probabilistic purification protocol can be derived via our SDP framework, which was originally proposed by Barenco et al.~\cite{barenco1997stabilization} aiming to project the $n$ copies of noisy state into its symmetric subspace. 
Notably, we have proved its optimality for the case of $n=2$ and any dimension $d$, which broadened our knowledge in the state purification task that has been explored in previous works~\cite{cirac1999optimal,fu2016quantum}. 
Our proof routine reduced the original problem to the investigation of the positivity of the operator $f_nR_n^{T_{A^n}}-Q_n^{T_{A^n}}$. It could pave the way for proving the optimality for the general case. 
We have also verified the optimality for a range of number of state copies $n$ and dimensions $d$ through numerical experiments shown in Table.~\ref{tab:comparison_n_d}. 
Therefore, we conjectured that the symmetric projection protocol $\cE^\ast_{A^n\to A}$ is optimal for general $n$ and $d$. 
In addition, explicit forms of $f_n$ and $p_n$ have been established to directly characterize the trade-off between fidelity and probability. 
Other noise models have been investigated through numerical experiments 
, which manifests the flexibility and potential of our purification frameworks.

Secondly, we have investigated the practicability and efficiency of the circuit implementation of the conjectured optimal protocol using the block encoding and parameterized quantum circuits techniques. 
Specifically, we have proposed explicit circuits of qubit systems for the cases of $n=3$ and $n=4$, respectively. 
It is worth noting that these circuits involve only one ancilla qubit, dramatically reducing the width of circuits compared with previous works~\cite{barenco1997stabilization,laborde2024quantum}. 
Thus, our analysis method used in the circuits described above plays a crucial role in practical settings.

Finally, a classical algorithm has been proposed to estimate the sample complexity of the protocol $\cE_{A^n\to A}^\ast$, which allows us to determine the expected numbers
of noisy states consumed in the probabilistic protocol for a given target fidelity. 
We have further observed that this protocol can also be used as a gadget in a recursive purification protocol. 
Our recursive protocol involves lower recursive depth and smaller quantum memory than the protocol in~\cite{childs2023streaming} for a given target fidelity. In general, the estimation algorithm and the recursive protocol pave the way for the application of state purification protocols in the real world.

For future work, a theoretical proof of the optimality of our purification protocol is desirable for the general case of $n>2$ and arbitrary $d$, although we have given numerical evidence. 
In addition, it will be meaningful to explore how to simplify the purification circuit to reduce the number of ancilla qubits for general $n$ and $d$, thus enhancing practical applicability. 
Another interesting direction is to analyze the optimal probabilistic purification protocol for various noise channels through SDP characterizations and by exploring the properties of different noises. 
It will also be interesting to further extend or apply the techniques of this work to quantum error mitigation (e.g. virtual state purification~\cite{koczor2021exponential, huggins2021virtual, liu2024virtual}) and to other purification tasks. 
In particular, the process similar to state purification, known as entanglement distillation~\cite{bennett1996purification, deutsch1996quantum, pan2003experimental, kalb2017entanglement,Devetak2005,dur2007entanglement,fang2019non}, plays a pivotal role in the realm of distributed quantum information processing and is of particular significance in the context of quantum networks~\cite{dur1999quantum, kimble2008quantum}.
Alongside the practical aspects of entanglement distillation, researchers have also delved into the fundamental limitations and possibilities of purifying quantum resources from noisy quantum states. These studies have been carried out within the framework of various quantum resource theories~\cite{fang2020no,regula2021fundamental,hastings2018distillation,wang2020efficiently,fang2018probabilistic,wang2019resource,chitambar2019quantum}. Thus, it will be promising if our SDP framework and analysis methods are utilized in the above significant research areas,
for example, the distillation of entanglement and other quantum resources.

\section*{Acknowledgement}
We would like to thank Zanqiu Shen, Guangxi Li, Zhan Yu, Yingjian Liu, Yu Gan, and Keming He, for their helpful discussions and comments.  This work was supported by the National Key R\&D Program of China (Grant No.~2024YFE0102500), the Guangdong Provincial Quantum Science Strategic Initiative (Grant No.~GDZX2303007), the Guangdong Provincial Key Lab of Integrated Communication, Sensing and Computation for Ubiquitous Internet of Things (Grant No.~2023B1212010007), the Quantum Science Center of Guangdong-Hong Kong-Macao Greater Bay Area, and the Education Bureau of Guangzhou Municipality.

\bibliographystyle{alpha}
\bibliography{references}

\newpage

\appendix
\setcounter{subsection}{0}
\setcounter{table}{0}
\setcounter{figure}{0}

\vspace{3cm}

\begin{center}
\Large{\textbf{Appendix for} \\ \textbf{
Protocols and Trade-Offs of Quantum State Purification}}
\end{center}

\renewcommand{\theproposition}{S\arabic{proposition}}
\setcounter{proposition}{0}


\section{Dual SDP for purification protocols}\label{appendix:dual_sdp}
For a given purification success probability $p$ and number of input state copies $n$, The primal SDP for calculating the optimal purification fidelity $F_{\cN}(n,p)$ is written as follows:
\begin{equation}
\begin{aligned}
    F_{\cN}(n,p)=\max\;&\tr[J^{\cE}_{{A^n A}}Q_n^{T_{A^n}}]/p\\
    {\rm s.t.}\;& \tr[J^{\cE}_{{A^nA}}R_n^{T_{A^n}}]= p,\\
    &J^{\cE}_{A^n A}\geq 0,\,\tr_{ A}[J^{\cE}_{A^n A}] \leq I_{A^n},
\end{aligned}
\end{equation}
where $T_{A^n}$ denotes partial transpose on system $A^n$ and 
\begin{equation}
    \begin{aligned}
        Q_n:=\int\cN(\psi)^{\otimes n}\otimes \psi\,d\psi, \quad  R_n:=\int\cN(\psi)^{\otimes n}\otimes I_A\,d\psi.
    \end{aligned}
\end{equation}
Now, we derive its dual SDP. Based on the primal SDP, the Lagrange function can be written as
\begin{align}
    \mathcal{L}(x, Y_{A^n}, J^{\cE}_{{A^n A}}):&= \frac{1}{p}\tr[J^{\cE}_{{A^n A}}Q_n^{T_{A^n}}] +\langle x,\tr[J^{\cE}_{{A^n A}}R_n^{T_{A^n}}] - p\rangle + \langle Y_{A^n}, \tr_{A}[J^{\cE}_{{A^n A}}]-I_{A^n}\rangle\\
    &=-px-\tr[Y_{A^n}]+\langle J^{\cE}_{{A^n A}}, \frac{1}{p}Q_n^{T_{A^n}}+R_n^{T_{A^n}}x+Y_{A^n}\otimes I_{A} \rangle,
\end{align} 
where $x$, $Y_{A^n}$ are Lagrange multipliers. Then, the Lagrange dual function can be written as
\begin{align}
     \mathcal{G}(x, &Y_{A^n}):=\sup_{J^{\cE}_{{A^n A}}\geq0}\mathcal{L}(x, Y_{A^n}, J^{\cE}_{{A^n A}}).
\end{align}
Since $J^{\cE}_{{A^n A}}\geq0$, it holds that
\begin{align}
    &\frac{1}{p}Q_n^{T_{A^n}}+R_n^{T_{A^n}}x+Y_{A^n}\otimes I_{A}\leq 0,
\end{align}
otherwise, the inner norm is unbounded. Similarly, we have $Y_{A^n}\leq 0$ since $\tr_{ A}[J^{\cE}_{A^n A}] \leq I_{A^n}$. Redefine $x$ as $x/p$ and $Y_{A^n}$ as $Y_{A^n}/p$. Then, we obtain the following dual SDP.
\begin{equation}
\begin{aligned}
    \min\;& -x-\tr[Y_{A^n}]/p\\
    {\rm s.t.}\;& Q_n^{T_{A^n}}+R_n^{T_{A^n}}x+Y_{A^n}\otimes I_{A}\leq 0\\
    &Y_{A^n}\leq 0.
\end{aligned}
\end{equation}
It is worth noting that the strong duality is held by Slater's condition.

Similarly, the primal SDP for calculating the optimal purification success probability $P_{\cN}(n,f)$ is written as follows:
\begin{equation}
\begin{aligned}
    P_{\cN}(n,f)=\max\;&\tr[J^{\cE}_{{A^n A}}R_n^{T_{A^n}}]\\
    {\rm s.t.}\;& \tr[J^{\cE}_{{A^nA}}Q_n^{T_{A^n}}]=\tr[J^{\cE}_{{A^n A}}R_n^{T_{A^n}}]f,\\
    &J^{\cE}_{A^n A}\geq 0,\,\tr_{A}[J^{\cE}_{{A^n A}}] \leq I_{A^n}.
\end{aligned}
\end{equation}
Now, we derive its dual SDP. Based on the primal SDP, the Lagrange function can be written as
\begin{align}
    \mathcal{L}(x, Y_{A^n}, J^{\cE}_{{A^n A}}):&= \tr[J^{\cE}_{{A^n A}}R_n^{T_{A^n}}] +\langle x,\tr[J^{\cE}_{{A^n A}}Q_n^{T_{A^n}}] - \tr[J^{\cE}_{{A^n A}}R_n^{T_{A^n}}]f\rangle + \langle Y_{A^n}, \tr_{A}[J^{\cE}_{{A^n A}}]-I_{A^n}\rangle\\
    &=-\tr[Y_{A^n}]+\langle J^{\cE}_{{A^n A}}, (1-xf)R_n^{T_{A^n}}+Q_n^{T_{A^n}}x+Y_{A^n}\otimes I_{A} \rangle,
\end{align} 
where $x$, $Y_{A^n}$ are Lagrange multipliers. Then, the Lagrange dual function can be written as
\begin{align}
     \mathcal{G}(x, &Y_{A^n}):=\sup_{J^{\cE}_{{A^n A}}\geq0}\mathcal{L}(x, Y_{A^n}, J^{\cE}_{{A^n A}}).
\end{align}
Since $J^{\cE}_{{A^n A}}\geq0$, it holds that
\begin{align}
    &(1-xf)R_n^{T_{A^n}}+Q_n^{T_{A^n}}x+Y_{A^n}\otimes I_{A}\leq 0,
\end{align}
otherwise, the inner norm is unbounded. Similarly, we have $Y_{A^n}\leq 0$ since $\tr_{ A}[J^{\cE}_{A^n A}] \leq I_{A^n}$. Then, we obtain the following dual SDP.
\begin{equation}
\begin{aligned}
    \min\;& -\tr[Y_{A^n}]\\
    {\rm s.t.}\;&(1-xf)R_n^{T_{A^n}}+Q_n^{T_{A^n}}x+Y_{A^n}\otimes I_{A} \leq 0\\
    &Y_{A^n}\leq 0.
\end{aligned}
\end{equation}
It is worth noting that the strong duality is held by Slater's condition.

\section{Proof of Lemma~\ref{lem:recursion_expression} and Lemma~\ref{lem:case_of_two}}
\begin{lemma}
Let $\delta$ be the noise parameter of the depolarizing channel and $\Pi_n$ be the projector on the symmetric subspace of $\cH^{\otimes n}_A$. Denote $p_n:=\tr[\Pi_n\Lambda^{\otimes n}]$ and $f_n:=\bra{0}\tr_{2\cdots n}[\Pi_n\Lambda^{\otimes n}\Pi_n^\dagger]\ket{0}/p_n$. Then, we have 
    \begin{equation}
        \begin{aligned}
             &p_n = \frac{1}{n}\sum_{j=1}^n p_{n-j} \tr\left[ \Lambda^j\right],\quad f_n=\frac{\sum_{j = 1}^np_{n-j} \lambda_0^j}{n p_n},
        \end{aligned}
    \end{equation}
    where $p_0=1$, $\Lambda:=\operatorname{Diag}(\lambda_0\,\cdots,\lambda_{d-1})$ denotes the eigenvalue matrix of a $d$-dimensional noisy state with $\lambda_0:=1-\frac{d-1}{d}\delta$, $\lambda_j=\frac{\delta}{d}$, for $j=1,\cdots,d-1$.
\label{lem:recursion_expression}
\end{lemma}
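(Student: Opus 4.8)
The plan is to route both recursions through elementary symmetric-function theory, since the stated recursion for $p_n$ is precisely Newton's identity in disguise. First I would record the standard fact that for any $X\in\cL(\cH_A)$ with eigenvalues $x_0,\dots,x_{d-1}$ one has $\tr[\Pi_n X^{\ox n}]=h_n(x_0,\dots,x_{d-1})$, the complete homogeneous symmetric polynomial of degree $n$. This holds because the symmetric subspace carries an orthonormal basis indexed by size-$n$ multisets over $\{0,\dots,d-1\}$, each of which (for diagonal $X$) is an eigenvector of $X^{\ox n}$ with eigenvalue the corresponding monomial, and summing the monomials over all multisets gives $h_n$. Applying this with $X=\Lambda$ gives $p_n=h_n(\lambda_0,\dots,\lambda_{d-1})$ with $p_0=1$, so the generating function is
\[
P(t):=\sum_{n\ge 0}p_n t^n=\det(I-t\Lambda)^{-1}=\prod_{i}(1-t\lambda_i)^{-1}.
\]
Taking the logarithmic derivative gives $tP'(t)=P(t)\sum_{j\ge 1}\tr[\Lambda^j]\,t^j$, and comparing the coefficient of $t^n$ yields $n\,p_n=\sum_{j=1}^n\tr[\Lambda^j]\,p_{n-j}$, which is the claimed recursion for $p_n$.

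For the fidelity, write $N_n$ for the numerator $\bra{0}\tr_{2\cdots n}[\Pi_n\Lambda^{\ox n}\Pi_n^\dagger]\ket{0}$. Since $\Pi_n$ is a projector that commutes with the permutation-invariant operator $\Lambda^{\ox n}$, I would first simplify $\Pi_n\Lambda^{\ox n}\Pi_n^\dagger=\Pi_n\Lambda^{\ox n}$; then, using cyclicity of the trace together with $\Lambda\,\ketbra{0}{0}=\lambda_0\ketbra{0}{0}$, rewrite
\[
N_n=\lambda_0\,\tr\!\big[\Pi_n\big(\ketbra{0}{0}_1\ox\Lambda^{\ox(n-1)}\big)\big].
\]
The key observation is that this quantity is a derivative of $p_n$. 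Since $\partial\Lambda/\partial\lambda_0=\ketbra{0}{0}$, the product rule gives $\partial_{\lambda_0}\Lambda^{\ox n}=\sum_{k=1}^n(\text{slot }k\text{ replaced by }\ketbra{0}{0})$, and the permutation invariance of $\Pi_n$ (namely $\mathbf{P}_n(\tau)\Pi_n\mathbf{P}_n(\tau)^\dagger=\Pi_n$, which lets one move the marked slot to position $1$ without changing the trace) makes all $n$ resulting traces equal. Hence $\partial_{\lambda_0}p_n=n\,\tr[\Pi_n(\ketbra{0}{0}_1\ox\Lambda^{\ox(n-1)})]$, i.e. $N_n=\tfrac{\lambda_0}{n}\,\partial_{\lambda_0}p_n$.

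It then remains to compute $\partial_{\lambda_0}p_n$. Differentiating $P(t)$ gives $\partial_{\lambda_0}P(t)=\frac{t}{1-t\lambda_0}P(t)$, and extracting the coefficient of $t^n$ produces $\partial_{\lambda_0}p_n=\sum_{j=1}^n\lambda_0^{\,j-1}p_{n-j}$. Substituting into the previous identity gives $N_n=\tfrac{1}{n}\sum_{j=1}^n\lambda_0^{\,j}p_{n-j}$, and dividing by $p_n$ yields exactly $f_n=\big(\sum_{j=1}^n p_{n-j}\lambda_0^{\,j}\big)/(n\,p_n)$, as stated.

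The step needing genuine care is the identity $N_n=\tfrac{\lambda_0}{n}\partial_{\lambda_0}p_n$: one must justify that the single marked factor $\ketbra{0}{0}_1$ can be symmetrized against $\Pi_n$ so that the $n$ terms of the product rule coincide, and that differentiating the trace term by term is legitimate (immediate, as everything is polynomial in $\Lambda$). The other place to be careful is proving $\tr[\Pi_n X^{\ox n}]=h_n$ cleanly via the multiset eigenbasis rather than by unwieldy bookkeeping of $\frac{1}{n!}\sum_c\mathbf{P}_n(c)$; once that and the generating function $P(t)=\det(I-t\Lambda)^{-1}$ are in hand, both recursions drop out purely formally. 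A direct operator recursion for $\Pi_n$ is tempting but less convenient here, since swapping the $n$-th system into the symmetrizer introduces $\Lambda^2$ factors that do not telescope into the desired $\tr[\Lambda^j]$ terms, so I would avoid that route in favor of the generating-function argument above.
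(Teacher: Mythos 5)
Your proof is correct, but it takes a genuinely different route from the paper's. The paper derives the recursion for $p_n$ by grouping the $n!$ permutations in $\Pi_n$ into conjugacy classes and invoking a cycle-index (P\'olya-type) identity, $\tr[\Pi_n\rho^{\ox n}]=\frac{1}{n}\sum_{j=1}^n\tr[(\mathbf{P}_{\{j\}}\ox\Pi_{\{j\}^c})\rho^{\ox n}]$, and then obtains $f_n$ by reducing the numerator to $\lambda_0\tr[\Pi_n(\ketbra{0}{0}\ox\Lambda^{\ox (n-1)})]$ and applying that same identity to the marked-slot operator. You instead recognize $p_n=\tr[\Pi_n\Lambda^{\ox n}]$ as the complete homogeneous symmetric polynomial $h_n(\lambda_0,\dots,\lambda_{d-1})$, so the first recursion is Newton's identity read off from the generating function $\prod_i(1-t\lambda_i)^{-1}$; you perform the same reduction of the fidelity numerator, but then evaluate the resulting trace as $\frac{\lambda_0}{n}\,\partial_{\lambda_0}p_n$ using only the permutation invariance $\mathbf{P}_n(\tau)\Pi_n\mathbf{P}_n(\tau)^\dagger=\Pi_n$, and extract the coefficient from the logarithmic derivative of the generating function. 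The two arguments are two faces of the same exponential-formula structure, but yours buys a cleaner justification: it replaces the conjugacy-class bookkeeping and the appeal to P\'olya's counting with standard symmetric-function identities, and the derivative trick sidesteps a small awkwardness in the paper, namely that its identity is stated for i.i.d.\ inputs $\rho^{\ox n}$ yet is applied to the non-i.i.d.\ operator $\ketbra{0}{0}\ox\Lambda^{\ox(n-1)}$ (harmless, since the identity really only needs permutation invariance of $\Pi_n$, but your route never raises the issue). The paper's approach, in exchange, stays within the symmetric-group formalism used elsewhere in the paper and yields both recursions from a single reusable trace identity.
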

\begin{proof}
Firstly, we will demonstrate the following properties:
\begin{equation}\label{eq.Recursion expression_P} 
    \tr\left[\Pi_{n} \rho^{\otimes n}\right]
    = \frac{1}{n}\sum_{j = 1}^n\tr\left[\mathbf{P}_{\{j\}} \otimes \Pi_{\{j\}^c}\rho^{\otimes n} \right],\quad \forall \rho^{\otimes n}\in\cD(\cH_A^{\otimes n})
\end{equation}
where $\mathbf{P}_{\{j\}}$ is an arbitrary cycle permutation operator of length $j$ acting on the $\{1,\cdots,j\}$ systems, $\Pi_{\{j\}^c}$ is the projector on the symmetric subspace acting on $\{j+1,\cdots,n\}$ systems, and we define $\{j\}:= \{1,\cdots,j\}$ and  $\{j\}^c:= \{1,\cdots,n\} - \{1,\cdots,j\} = \{j+1,\cdots,n\}$, a set of $n-j$ systems.
Specifically,  $K(\cS_n)$ is denoted as the number of conjugacy classes in the symmetric group $\cS_n$. The $i^{th}$ conjugacy class of $\cS_n$ can be denoted by $(l_i):= (1^{{{v_1}}},2^{{{v_2}}},\cdots, n^{{{v_n}}})$, $(l_i)\in\cS_n$, which must satisfy ${{v_1}},{{v_2}}, \cdots, {{v_n}} \in \mathbb{N^*}$, and $\sum_{j=1}^n j \cdot {{v_j}} = n$. The number of elements in the certain conjugacy class $(l_i)$ of $\cS_n$ is given by $C^{(l_i)} := \frac{n!}{1^{{v_1}}2^{{v_2}}\cdots n^{{v_n}} {{v_1}}! {{v_2}}! \cdots {{v_n}}! }$. One can find that the expression $\tr\left[\Pi_{n} \rho^{\otimes n}\right]$ can be written as follows:
    \begin{equation} \label{eq:Probability_n_copies_states}
    \begin{aligned}
    \tr\left[\Pi_{n} \rho^{\otimes n}\right]
        &=\frac{1}{n!}\sum_{i=1}^{K(\cS_n)}\sum_{i^{\prime}=1}^{C^{(l_i)}}\tr\left[\mathbf{P}_n{(c^{(l_i)}_{i^{\prime}})} \rho^{\otimes n} \right]\\
        &=\frac{1}{n!}\sum_{i=1}^{K(\cS_n)}C^{(l_i)}\tr\left[\mathbf{P}_n{(c^{(l_i)})} \rho^{\otimes n} \right]\\
        &=\tr\left[\widetilde{\Pi}_{n} \rho^{\otimes n}\right],
    \end{aligned}           
    \end{equation}
where $\widetilde{\Pi}_{n}$ denotes the projector composed of the same permutation operator in the $(l_i)$ conjugacy class, and $c^{(l_i)}_{i^{\prime}}$ and $c^{(l_i)}$ refer to the ${i^{\prime}}^{th}$ element and an arbitrary element in the conjugacy class $(l_i)$, respectively. These correspond to $\mathbf{P}_n{(c^{(l_i)}_{i^{\prime}})}$ and $\mathbf{P}_n{(c^{(l_i)})}$ which are permutation operators. Considering $n = 3$, there are $K(\cS_3) = 3$ conjugacy classes, including $(1)$, $(12)$ and $(123)$ class. Therefore, we have the following equation: 
 \begin{equation}
     \begin{aligned}
         \tr\left[\Pi_{3} \rho^{\otimes 3}\right] =& \frac{1}{3!}\tr\left[\left(\mathbf{P}_3((1))+\mathbf{P}_3((12))+\mathbf{P}_3((13))+\mathbf{P}_3((23))+\mathbf{P}_3((123))+\mathbf{P}_3((132))\right) \rho^{\otimes 3}\right]\\
         =& \frac{1}{3!}\tr\left[ \left(\mathbf{P}_3((1))+3\mathbf{P}_3((12))+2\mathbf{P}_3((123))\right)\rho^{\otimes 3}\right]=\tr\left[\widetilde{\Pi}_{3} \rho^{\otimes 3}\right].
     \end{aligned}
 \end{equation}
 Due to the P$\Acute{\rm o}$lya’s theory of counting~\cite{pletsch2010recursion}, we have 
 \begin{equation}
     \begin{aligned}
         \tr\left[\widetilde{\Pi}_{n} \rho^{\otimes n}\right]=\frac{1}{n}\sum_{j = 1}^n\tr\left[ \mathbf{P}_{\{j\}}\rho^{\otimes j} \right]\tr\left[ \Pi_{\{j\}^c}\rho^{\otimes n-j} \right].
     \end{aligned}
 \end{equation}

Secondly, we will show the recursive forms of $p_n$ and $f_n$. According to Eq.~\eqref{eq.Recursion expression_P}, the expression $p_n$ can be rewritten as following recursive form:
    \begin{equation}\label{eq.pn}
       \begin{aligned}
        p_n &=\tr\left[\Pi_{n} \Lambda^{\otimes n}\right] \\
           &= \frac{1}{n}\sum_{j = 1}^n \tr\left[\mathbf{P}_{\{j\}} \Lambda^{\otimes j}\right]\tr\left[\Pi_{\{j\}^c}\Lambda^{\otimes n-j}\right]\\
           &= \frac{1}{n}\sum_{j=1}^n p_{n-j} \tr\left[ \Lambda^j\right],
       \end{aligned}
   \end{equation}
where the third equation follows from the fact that $\tr\left[\mathbf{P}_{\{j\}}\Lambda^{\otimes j}\right]=\tr[\Lambda^j]$ and the definition of $p_n$, i.e., $p_{n-j}=\tr\left[\Pi_{\{j\}^c}\Lambda^{\otimes n-j}\right]$. We further derive the recursive form of $f_n$ as follows:
\begin{equation}\label{eq:fn}
    \begin{aligned}
        f_n:&=
        \frac{\bra{0}\tr_{2\cdots n}[\Pi_n\Lambda^{\otimes n}\Pi_n]\ket{0}}{p_n}\\
           &=\frac{1}{p_n}\tr\left[ \Pi_n  
 \cdot \ketbra{0}{0} \otimes I^{\otimes n-1} \Lambda^{\otimes n} \ketbra{0}{0} \otimes I^{\otimes n-1} \cdot \Pi_n\right]\\
           &=\frac{1}{p_n}\tr\left[ \Pi_n  \cdot \left(\lambda_0\ketbra{0}{0} \otimes \Lambda^{\otimes n-1}\right) \right],
       \end{aligned}
   \end{equation}
   where $\lambda_0$ is the first eigenvalue of any noisy state. The third equation follows from the fact that $\tr\left[ \ketbra{0}{0} \otimes I^{\otimes n-1}   
 \cdot \Pi_n \Lambda^{\otimes n} \Pi_n \cdot \ketbra{0}{0} \otimes I^{\otimes n-1} \right]=\tr\left[ \Pi_n  
 \cdot \ketbra{0}{0} \otimes I^{\otimes n-1} \Lambda^{\otimes n} \ketbra{0}{0} \otimes I^{\otimes n-1} \cdot \Pi_n\right]$ for any $\Lambda$.
   Depending on Eq.~\eqref{eq.Recursion expression_P}, the equation can be rewritten by:
   \begin{equation}
       \begin{aligned}
           f_n &=\frac{1}{n p_n} \sum_{j = 1}^n \tr\left[ \mathbf{P}_{\{j\}} \cdot \left(\lambda_0\ketbra{0}{0} \otimes \Lambda^{\otimes j-1} \right)\right] \cdot \tr\left[ \Pi_{\{j\}^c} \cdot \Lambda ^{\otimes n-j} \right]\\
           &=\frac{1}{n p_n} \sum_{j = 1}^n \lambda_0\tr\left[ \ketbra{0}{0} \cdot \Lambda^{ j-1} \right] p_{n-j}\\
           &=\frac{\sum_{j = 1}^np_{n-j} \lambda_0^j}{n p_n},
       \end{aligned}
   \end{equation}
which completes this proof.
\end{proof}

\begin{lemma}
   Let $\delta$ be the noise parameter of the depolarizing channel $\cN_{\delta}$, and $\Lambda:=\operatorname{Diag}(\lambda_0\,\cdots,\lambda_{d-1})$ be the eigenvalue matrix of a $d$-dimensional noisy state, where  $\lambda_0:=1-\frac{d-1}{d}\delta$, $\lambda_j=\frac{\delta}{d}$, for $j=1,\cdots,d-1$. Denote $Q_2:=\int\cN_\delta(\psi)^{\otimes 2}\otimes\psi_3\,d\psi$, $R_2:=\int\cN_\delta(\psi)^{\otimes 2}\otimes I_3\,d\psi$, we have 
    \begin{equation}
        f_2R_2-Q_2^{T_3}\geq 0,
    \end{equation}
    where $f_2:=\frac{1}{2p_2}(\lambda_0+\lambda_0^2)$, $p_2 := \frac{1}{2}(1+\tr[\Lambda^2])$, and $T_3$ denotes the partial transpose on the third system.
    \label{lem:case_of_two}
\end{lemma}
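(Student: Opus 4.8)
The plan is to compute both $R_2$ and $Q_2$ explicitly, take the partial transpose, and then exploit the symmetry of the problem to reduce the operator inequality to a positivity check on a handful of low-dimensional blocks. Writing $\cN_\delta(\psi)=U\Lambda U^\dagger$ for a unitary $U$ with $U\ket{0}=\ket{\psi}$, Haar invariance turns both integrals into twirls over $U^{\otimes 3}$; hence each result commutes with every $U^{\otimes 3}$ and lies in the span of the permutation operators $\{\mathbf{P}_3(c):c\in\cS_3\}$ by Schur--Weyl duality. Since the third factor of $R_2$ carries only $I_A$, one gets $R_2=\big(\int\cN_\delta(\psi)^{\otimes 2}\,d\psi\big)\otimes I_A=\tfrac{p_2}{D(2,d)}\,\Pi_2\otimes I_A+\tfrac{1-\tr[\Lambda^2]}{d(d-1)}\,\Pi_2^{-}\otimes I_A$, where $\Pi_2^{-}=\tfrac12(I-\mathbf{P}_3((12)))$; the two coefficients are the twirl-preserved weights $\tr[\Pi_2\Lambda^{\otimes 2}]=p_2$ and $\tr[\Pi_2^{-}\Lambda^{\otimes 2}]=\tfrac12(1-\tr[\Lambda^2])$ divided by the ranks of the projectors. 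In particular $R_2^{T_3}=R_2$, which is exactly why the lemma pits $Q_2^{T_3}$ against $R_2$ itself.

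For $Q_2=\sum_{c\in\cS_3}q_c\,\mathbf{P}_3(c)$ I would fix the coefficients from the overlaps $\tr[Q_2\,\mathbf{P}_3(\sigma)]=\tr[(\Lambda^{\otimes 2}\otimes\ketbra{0}{0})\mathbf{P}_3(\sigma)]$, which, by the cycle-product rule for permutation traces, equal $1,\ \tr[\Lambda^2],\ \lambda_0,\ \lambda_0^2$ according as $\sigma$ has cycle type $e,\ (12),\ (i3),\ \text{3-cycle}$; inverting the Gram matrix $\tr[\mathbf{P}_3(\sigma)\mathbf{P}_3(\tau)]=d^{\,c(\sigma\tau)}$ (with $c(\cdot)$ the number of cycles) then yields the $q_c$. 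Invariance of $Q_2$ under conjugation by $\mathbf{P}_3((12))$ forces $q_{(13)}=q_{(23)}$ and $q_{(123)}=q_{(132)}$, halving the unknowns.

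Next I would apply the partial transpose $T_3$. The identity and $\mathbf{P}_3((12))$ are unchanged; a transposition involving system $3$ becomes $d$ times a maximally entangled projector, e.g. $\mathbf{P}_3((13))^{T_3}=d\,\Phi_{13}\otimes I_{2}$ with $\Phi=\ketbra{\Phi}{\Phi}$ and $\ket{\Phi}=\tfrac1{\sqrt d}\sum_i\ket{ii}$, while the two $3$-cycles transpose into operators built from these entangled states contracted with $\mathbf{P}_3((12))$. The cleanest bookkeeping is to note that $Q_2^{T_3}$ is invariant under $V\otimes V\otimes\bar V$ for every unitary $V$ (partial transpose maps the commutant of $U^{\otimes 3}$ into that of $U\otimes U\otimes\bar U$), so both $Q_2^{T_3}$ and $R_2$ lie in the walled Brauer algebra $B_{2,1}(d)$, spanned by $\{I,\ \mathbf{P}_3((12)),\ \Phi_{13},\ \Phi_{23},\ \mathbf{P}_3((12))\Phi_{13},\ \mathbf{P}_3((12))\Phi_{23}\}$; the coefficients of $Q_2^{T_3}$ in this basis follow immediately from the $q_c$.

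Finally I would verify positivity. Decomposing $\cH_A^{\otimes 2}\otimes\cH_A$ into the irreducible blocks of $B_{2,1}(d)$ simultaneously block-diagonalizes both operators: the two ``traceless'' blocks on which the contractions $\Phi_{13},\Phi_{23}$ vanish reduce to scalar inequalities, while the copy of $\cH_A$ appearing with multiplicity two (from the symmetric and antisymmetric parts of the two input copies) gives a single $2\times 2$ block. Thus $f_2R_2-Q_2^{T_3}\geq 0$ reduces to positivity of a couple of scalars and one $2\times 2$ matrix, and I expect the largest generalized eigenvalue of the pencil $(Q_2^{T_3},R_2)$ to be attained on the $2\times 2$ block and to equal exactly $\tfrac{\lambda_0+\lambda_0^2}{1+\tr[\Lambda^2]}=f_2$, with equality there -- consistent with the golden-point tightness invoked in Theorem~\ref{thm:opt_fidelity_probability}. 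The main obstacle is precisely the $3$-cycle terms: their partial transposes are not permutation operators, they couple the symmetric/antisymmetric sectors of the inputs to the entangled contractions with the probe system, and it is these off-diagonal contributions that set the threshold. Computing them correctly -- via the walled-Brauer irreducible decomposition or an explicit eigenbasis adapted to $\Phi_{13},\Phi_{23}$ -- and confirming the bound is saturated at exactly $f_2$ (rather than merely dominated by some larger constant) is the crux of the argument.
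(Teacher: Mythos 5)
Your proposal follows essentially the same route as the paper: both reduce the problem to the six-dimensional commutant of $U\otimes U\otimes \bar{U}$ --- the walled Brauer algebra $B_{2,1}(d)$ you invoke is exactly the span of the Eggeling--Werner basis $\{S_+,S_-,S_0,S_1,S_2,S_3\}$ the paper uses (with $X=\mathbf{P}_3((23))^{T_3}$ playing the role of your $d\,\Phi_{23}$ and $V=\mathbf{P}_3((12))$) --- and then check positivity blockwise, your $2\times 2$-block condition being precisely the paper's criterion $s_1^2+s_2^2+s_3^2\le s_0^2$. The paper carries out the arithmetic you defer as ``the crux'' by expanding $\cN_\delta(\psi)=(1-\delta)\psi+\frac{\delta}{d}I$ to get the coefficients of the $\mathbf{P}_3(c)^{T_3}$ directly, computing $s_\pm,s_0,s_1$ explicitly, and invoking Lemma~2 of Eggeling--Werner, which confirms your expectation that the bound is tight at $f_2$.
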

\begin{proof}
    Denote $\omega:=\frac{1}{t_2}\left(f_2R_2-Q_2^{T_3}\right)$ with $t_2:=\tr\left[f_2R_2-Q_2^{T_3}\right] $, which is a hermitian operator. Observing that $t_2 = f_2 d - 1 > 0$, which means that we need to focus on the positivity of $\omega$. Specifically, denote $\mathbf{P}_3(c)$ as the permutation operator of the symmetric group $\cS_3$, for $c\in\cS_3$, and $D(n,d):=\tbinom{n+d-1}{n}$. One can find that the operator $\omega$ is a linear combination of six operators $\mathbf{P}_3(c)^{T_3}$, i.e.,
    \begin{equation}
        \begin{aligned}
        \omega=&\frac{f_2}{t_2} \int\cN_\delta(\psi)^{\otimes 2}\otimes I_3\,d\psi - \frac{1}{t_2}\int\cN_\delta(\psi)^{\otimes 2}\otimes\psi^{T_3}\,d\psi \\
        =& \frac{f_2}{t_2} \int\left((1-\delta)\psi + \frac{\delta}{d}I \right)^{\otimes 2}\otimes I_3\,d\psi - \frac{1}{t_2}\int\left((1-\delta)\psi + \frac{\delta}{d}I \right)^{\otimes 2}\otimes\psi^{T_3}\,d\psi\\
        =& \frac{f_2}{t_2} \left((1-\delta)^2\frac{\Pi_{12}}{D(2,d)} + \frac{2\delta-\delta^2}{d^2} I_{12} \right)\otimes I_3 \\
        &\quad-\frac{1}{t_2}\left((1-\delta)^3\frac{\Pi_{123}^{T_3}}{D(3,d)} + \frac{2(1-\delta)^2\delta}{d }\frac{\Pi_{12}}{D(2,d)} \ox I_3 + \frac{3\delta - 2\delta^2}{d^2}I_{123} \right)\\
        =&\frac{1}{t_2}\left(\frac{f_2d(2\delta-\delta^2)-\delta^2}{d^3}-\frac{(1-\delta)\delta}{dD(2,d)}+\frac{f_2(1-\delta)^2}{2D(2,d)}-\frac{(1-\delta)^2}{6D(3,d)}\right)I_{123}\\
        &\quad+\frac{1}{t_2}\left(\frac{f_2(1-\delta)^2}{2D(2,d)}-\frac{(1-\delta)^2}{6D(3,d)}\right)\mathbf{P}_3((12))
        - \frac{1}{t_2}\left(\frac{(1-\delta)^2}{6D(3,d)}+\frac{(1-\delta)\delta}{2dD(2,d)}\right)\mathbf{P}_3((13))^{T_3}\\
        &\quad - \frac{1}{t_2}\left(\frac{(1-\delta)^2}{6D(3,d)}+\frac{(1-\delta)\delta}{2dD(2,d)}\right)\mathbf{P}_3((23))^{T_3}
        -\frac{(1-\delta)^2}{6D(3,d)t_2}\mathbf{P}_3((123))^{T_3}\\
        &\quad -\frac{(1-\delta)^2}{6D(3,d)t_2}\mathbf{P}_3((132))^{T_3},
        \end{aligned}
    \end{equation}
    where $\Pi_{123}$ and $\Pi_{12}$ are the projector acting on $\{1,2,3\}$ and  $\{1,2\}$ systems, respectively. Notice that the operator $\omega$ can be represented by a basis $\{S_+,S_-,S_0,S_1,S_2,S_3\}$~\cite{eggeling2001separability}. This basis can be generated by $X$ and $V$, which is as follows:
    \begin{equation}
        \begin{aligned}
        S_+:&=\frac{I+V}{2}\left(I-\frac{2X}{d+1}\right)\frac{I+V}{2},\, S_-:=\frac{I-V}{2}\left(I-\frac{2X}{d-1}\right)\frac{I-V}{2},\\
        S_0:=\frac{1}{d^2-1}&[d(X+VXV)-(XV+VX)],\,  S_1:=\frac{1}{d^2-1}[d(XV+VX)-(X+VXV)],\\
        &S_2:=\frac{1}{\sqrt{d^2-1}}(X-VXV),\,S_3:=\frac{i}{\sqrt{d^2-1}}(XV-VX),
        \end{aligned}
    \end{equation}
    where $I:=\mathbf{P}_3((1))$, $X:=\mathbf{P}_3((2,3))^{T_3}$ and $V:=\mathbf{P}_3((1,2))$. It satisfies that $X^2=dX$, $V^2=I$, and $XVX=X$. Furthermore, one can find that $\tr[X]=d^2$, $\tr[XV]=d$. Based on these properties, we  have the following equations:
    \begin{equation}
        \begin{aligned}
            s_+:=&\tr[\omega S_+]=\frac{d(2-2\delta+\delta^2)+(2-\delta)\delta}{2d},\\
            s_-:=&\tr[\omega S_-]= - \frac{(d-2)\delta(d(2-\delta)+\delta)^2}{2d(d-1)(d(\delta-2)-2\delta)},\\
            s_0:=&\tr[\omega S_0]=\frac{2\delta(d\delta-d-\delta)}{d(d-1)(d(\delta-2)-2\delta)},\\
            s_1:=&\tr[\omega S_1]=\frac{2\delta(d+\delta-d\delta)}{d(d-1)(d(\delta-2)-2\delta)},\,
            s_j:=\tr[\omega S_j]=0,\,j=2,3.
        \end{aligned}
    \end{equation}
    It is straightforward to check that these equations satisfy the following situations:  $s_+,s_-,s_0\geq 0$, $s_+ + s_- + s_0=1$ and $s_1^2+s_2^2+s_3^2\leq s_0^2$. Notice that these six operators $S_+$, $S_-$, $S_0$, $S_1$, $S_2$, $S_3$ are characterized by the following commutation relations: $S_jS_{\pm}=S_{\pm}S_j=0$, $S^2_{j}=S_0$, for $j=0,1,2,3$ and $S_1S_2=iS_3$ with cyclic permutations. According to Lemma 2 proposed in~\cite{eggeling2001separability}, we obtain $\omega\geq 0$ which means that $f_2R_2-Q_2^{T_3} \geq 0$. We complete this proof.
\end{proof}

\end{document}